\newtheorem{thm}{Theorem}
\newtheorem*{thm*}{Theorem}
\newtheorem{defn}{Definition}
\newtheorem*{exm*}{Example}
\newtheorem{asm}{Assumption}
\newcommand{\rr}{\mathrm}
\newcommand{\cc}{\mathcal}
\newcommand{\bb}{\mathbb}	
\newcommand{\ff}{\mathfrak}
\newcommand{\bo}{\bm}
\newcommand{\bs}{\boldsymbol}
\newcommand\fs@spaceruled{\def\@fs@cfont{\bfseries}\let\@fs@capt\floatc@ruled
	\def\@fs@pre{\vspace{0.5\baselineskip}\hrule height.8pt depth0pt \kern2pt}%
	\def\@fs@post{\kern2pt\hrule\relax}%
	\def\@fs@mid{\kern2pt\hrule\kern2pt}%
	\let\@fs@iftopcapt\iftrue}
\begin{document}

\title{On Differential Privacy for Federated Learning in Wireless Systems with Multiple Base Stations}

\author{
	Nima Tavangaran,~\IEEEmembership{Member,~IEEE,}
    Mingzhe Chen,~\IEEEmembership{Member,~IEEE,}
	Zhaohui Yang,~\IEEEmembership{Member,~IEEE,}\\
	Jos\'e Mairton B. Da Silva Jr.,~\IEEEmembership{Member,~IEEE,}
	and H. Vincent Poor,~\IEEEmembership{Fellow,~IEEE}%
	\thanks{N. Tavangaran, J. M. B. da Silva Jr., and H. V. Poor are with the Department of Electrical and Computer Engineering, Princeton University, Princeton, NJ, USA, (e-mails: nimat@princeton.edu and poor@princeton.edu).}%
	\thanks{M. Chen is with the Department of Electrical and Computer Engineering and Institute for Data Science and Computing, University of Miami, Coral Gables, FL, 33146 USA (e-mail: mingzhe.chen@miami.edu).}
	\thanks{Z. Yang is with the College of Information Science and Electronic Engineering, Zhejiang University, Hangzhou 310027, China, (e-mail: yang\_zhaohui@zju.edu.cn).}
	\thanks{J. M. B. da Silva Jr. is also with the School of Electrical Engineering and Computer Science, KTH Royal Institute of Technology, Stockholm, Sweden (e-mail: jmbdsj@kth.se).}
	\thanks{The work of N. Tavangaran was partly supported by the German Research Foundation (DFG) under Grant TA 1431/1-1.}%
	\thanks{The work of H. V. Poor was supported by the U.S. National Science Foundation under Grants CCF-1908308 and CNS-2128448.}%
	\thanks{This work has been submitted to the IEEE for possible publication.  Copyright may be transferred without notice, after which this version may no longer be accessible.}
}

\maketitle

\begin{abstract}
In this work, we consider a federated learning model in a wireless system with multiple base stations and inter-cell interference. We apply a differential private scheme to transmit information from users to their corresponding base station during the learning phase. We show the convergence behavior of the learning process by deriving an upper bound on its optimality gap. Furthermore, we define an optimization problem to reduce this upper bound and the total privacy leakage. To find the locally optimal solutions of this problem, we first propose an algorithm that schedules the resource blocks and users. We then extend this scheme to reduce the total privacy leakage by optimizing the differential privacy artificial noise. We apply the solutions of these two procedures as parameters of a federated learning system. In this setting, we assume that each user is equipped with a classifier. Moreover, the communication cells are assumed to have mostly fewer resource blocks than numbers of users. The simulation results show that our proposed scheduler improves the average accuracy of the predictions compared with a random scheduler. Furthermore, its extended version with noise optimizer significantly reduces the amount of privacy leakage.
\end{abstract}

\begin{IEEEkeywords}
Differential privacy, federated learning, neural networks, wireless channel, multiple base stations.
\end{IEEEkeywords}

\section{Introduction}

Machine Learning (ML) systems are expected to play an important role in future mobile communication standards~\cite{eldar2022machine}. With increasing applications of ML schemes in wireless systems, new technologies are emerging to enhance the performance of such systems. On the other hand the wireless technology itself can also be deployed to enhance the ML procedures~\cite{hellstrom2022wireless}. Among possible candidates, Federated Learning (FL) has been shown to have considerable promise~\cite{bonawitz2019towards,konevcny2016federated,mcmahan2017communication} and has the potential to benefit from wireless communication.

FL solves several issues of centralized ML systems by distributing the learning task among several edge devices. One advantage of using an FL system, which makes it a good fit in a wireless setting, is that edge devices do not need to transmit their local datasets to the server. This reduces the amount of wireless resources that is required for accomplishing the given ML task. Apart from this, the privacy of each edge device is not completely compromised since the server does not have a direct access to the data~\cite{li2020federated}.

FL schemes operating over wireless networks have
been extensively researched in recent years; see for example~\cite{samarakoon2019distributed,chen2020joint,amiri2020federated,yang2019eeFL,hamdi2021federated,wang2022int,chen2022fedw}. In~\cite{chen2020joint}, the authors studied the effects of wireless parameters on the FL process. They
derived an upper bound on the optimality gap of the convergence terms and proposed an optimization problem to minimize the upper bound by considering wireless parameters like resource allocation, user scheduling, and packet error rate. Some other works that studied the communication aspects of FL are~\cite{zheng2020energy,wang2019adaptive,tran2019federated}. Moreover, FL with several layers of aggregation or with hierarchy has been studied in~\cite{hosseinalipour2020multistage,khan2021disp, khan2021disp2,zhang2022scalable,pandey2022edge,asad2022fl}.

Although the training data of each device in FL is not transmitted to the server, yet a function of the local model (query) is still sent to the server. It has been shown that this local model might leak some information about the training data~\cite{al2016reconstruction}. To mitigate this drawback, FL has been extensively studied together with a privacy preserving scheme called Differential Privacy (DP)~\cite{dwork2006calibrating}. 

DP-based schemes follow the principle of not being adversely affected much by having one's data used in any analysis~\cite{dwork2014algorithmic}. This powerful notion is well established and is applied in industry. To realize a DP-based FL system, each edge device adds some artificial noise to its transmitting information. This noise provides a certain amount of privacy depending on the noise power and sensitivity of the query function. 

DP based FL and its convergence behavior have been extensively studied; see for example~\cite{hu2020concentrated,wu2021incentivizing,sun2021pain,wei2020federated,wei2021low,seif2021privacy,liu2022fl}. In this regard, the work in~\cite{hu2020concentrated} addresses the privacy implementation challenges through a combination of zero-concentrated
differential privacy, local gradient perturbation and secure aggregation.\cite{wei2021low} considers the resource allocation and user scheduling to minimize the FL training delay under the constraint of performance and DP requirements. Finally,~\cite{liu2022fl} presents a closed-form global loss and privacy leakage of a DP-based FL system and then minimizes the loss and privacy leakage.

However, none of these works consider a joint learning and resource allocation scheme for DP-based FL that considers the effects of inter-cell interference as well. 
In this paper, we adapt the framework in~\cite{chen2020joint} and consider a wireless FL system in a multiple base station scenario. Additionally, we consider DP noise added to the gradients~\cite{hu2020concentrated} and combine this approach with resource scheduling.

The goal of the FL system here is to train a global model for a given predictor. We introduce an iterative DP-based FL scheme with two levels of aggregation (Algorithm~\ref{alg_dfl}) and then derive an upper bound on the optimality gap of its convergence terms (Theorem~\ref{thm_con}).

We then propose an optimization problem whose goal is to improve the convergence of the upper bound on the optimality gap of Algorithm~\ref{alg_dfl} and simultaneously reduce the total privacy leakage. In this regard, the optimization problem is with respect to certain variables like user and resource scheduling, uplink transmit powers, and the amount of DP noise that is applied by each user to its transmitting information.

Since the proposed optimization problem is a non-linear multi-variable mixed integer programming, we divide it into two simpler schemes. 

First, we present a suboptimal approach to minimize the objective function only with respect to resource scheduling variables in a sequential manner, i.e., cell by cell. This reduces the original problem to a linear integer programming task and substantially simplifies the implementation. We call this scheme also the optimal scheduler (OptSched). Since this approach performs sequentially from one cell to another one, the amounts of optimal transmit power should be adjusted carefully due to the effects of inter-cell interference. To tackle this problem, we introduce a procedure to determine the users' optimal transmit powers by solving a simple optimization problem. 

Next, we enhance the OptSched scheme by further minimizing the objective function of the proposed optimization problem with respect to the DP noise. This leads us to a convex optimization problem with respect to the DP noise standard deviations. We call this extended scheme the optimal scheduler with DP optimizer (OptSched+DP).

We present all the numerical optimizations and benchmarking results. In this regard, we apply Python optimization packages like CVXPY, CVXOPT, GLPK, and ECOS~\cite{diamond2016cvxpy,agrawal2018rewriting,cvxopt,glpk,Domahidi2013ecos}. The numerical results show that our proposed schemes (OptSched and OptSched+DP) reduce the objective function of the optimization task substantially compared with the case in which we randomly allocate the resources and apply the DP noise. 

Next, we apply these (sub-)optimal parameters to our iterative learning scheme (Algorithm~\ref{alg_dfl}). In this regard, each user is equipped with a fully connected multi-layer neural network as a classifier. Furthermore, we assume that communication cells have mainly more users than available resource blocks. This is a legitimate assumption due to the bandwidth limitation. We then perform simulations to measure the accuracy, loss, and the amount of the privacy leakage in such a system for the proposed algorithms. To realize the simulations, we apply the TensorFlow, NumPy, and Matplotlib   packages~\cite{tensorflow2015-whitepaper,harris2020array,Hunter:2007}.

The simulations show that the OptSched scheme predominantly improves the classification accuracy by scheduling the users who have larger data chunks and better uplink channels. The OptSched+DP scheme, on the other hand, achieves a significant reduction in privacy leakage of individual users by systematically adjusting the DP noise power and moderately sacrificing accuracy.

\vspace{3px}
\emph{Notation:}
We denote vectors by lowercase bold letters, e.g. $\bo{w}$. Matrices are represented by uppercase bold letters like $\bo{X}$, or the identity matrix $\bo{I}_d$ with $d$ rows and $d$ columns. Sets are denoted by Calligraphic fonts like $\cc{X}$. Random mechanisms as a special kind of functions are represented by Fraktur fonts, e.g. $\ff{M}$. The transpose of a vector $\bo{x}$ is denoted by $\bo{x}^\intercal$. Logarithms are assumed to be to the basis 2. The set of real numbers is represented by $\bb{R}$. $[R]$ denotes the set $\{1,2,\ldots,R\}$.

\section{System Model}\label{sec_mod}

We begin this section by reviewing some preliminary notions on DP that are required in this work. The complete list of definitions can be found in~\cite{dwork2006calibrating,dwork2014algorithmic,bun2016concentrated}.

\subsection{Differential Privacy Model}
Let a data universe $\cc{X}$ and the distribution $P_X$ on it be given. Assume that a database is denoted by a matrix $\bo{X}\in\cc{X}^{K\times m}$ and contains $K$ rows of independent and identically distributed (i.i.d.) $m$-dimensional samples (row vectors). Two databases $\bo{X},\tilde{\bo{X}}\in\cc{X}^{K\times m}$ are called adjacent if they differ only in one row. 

A query (mechanism) $q:\cc{X}^{K\times m}\rightarrow \bb{R}^d$ is a function which takes a database $\bo{X}\in\cc{X}^{K\times m}$ as input and gives a $d$-dimensional output. If the output of the query contains randomness then it is called a randomized mechanism.

In the following, we introduce the notion of privacy for randomized mechanisms, which are defined on a given set of databases $\cc{X}^{K\times m}$.

\begin{defn}
	A randomized mechanism $\ff{M}:\cc{X}^{K\times m}\rightarrow\bb{R}^d$ is said to be $(\epsilon,\delta)$-Differentially Private, or for short $(\epsilon,\delta)$-DP, if for every adjacent $\bo{X},\tilde{\bo{X}}\in\cc{X}^{K\times m}$, we have that
	\begin{align}
		\rr{Pr}\big(\ff{M}(\bo{X})\in\cc{W}\big)\leq \rr{e}^{\epsilon}\,\rr{Pr}\big(\ff{M}(\tilde{\bo{X}})\in\cc{W}\big)+\delta
	\end{align}
	holds for any $\cc{W}\subset\bb{R}^d$.
\end{defn}

In this work, we apply a relaxed version of the $(\epsilon,\delta)$-DP that is more suitable for Gaussian mechanisms.

\begin{defn}
	A randomized mechanism $\ff{M}:\cc{X}^{K\times m}\rightarrow\bb{R}^d$ is said to be $\rho$-zero-Concentrated Differentially Private (CDP), or for short $\rho$-zCDP, if
	\begin{align}
		\rr{D}_{\alpha}\big(\ff{M}(\bo{X})\,\|\,\ff{M}(\tilde{\bo{X}})\big)\leq \rho\alpha
	\end{align}
	holds for every adjacent $\bo{X},\tilde{\bo{X}}\in\cc{X}^{K\times m}$  and all $\alpha\in(1,\infty)$, where $\rr{D}_{\alpha}$ is the $\alpha$-R\'enyi divergence~\cite{bun2016concentrated}.
\end{defn}

\subsection{Federated Learning Model}

Based on the notions from previous section, we introduce our privacy preserving FL model for a system with multiple base stations. Let a collection of  base stations denoted by the set $\cc{S}$ be given such that they can communicate with each other through a main server. Assume that each base station $s\in\cc{S}$ serves a set of edge devices (users) denoted by $\cc{U}_s$, where the users in $\cc{U}_s$ have some arbitrary order. Let $U_s$ denote the size of this set.

We assume that each user $i\in\cc{U}_s$ assigned to the base station $s$ has access to a database
\begin{align*}
	\bo{X}_{s,i}&\coloneqq\big(\bo{x}_{s,i}^{(1)},\bo{x}_{s,i}^{(2)},\ldots,\bo{x}_{s,i}^{(K_{s,i})}\big)^\intercal\in\bb{R}^{K_{s,i}\times m},
\end{align*}
where $K_{s,i}$ is the number of samples (row vectors) in the database $\bo{X}_{s,i}$. Each row of the above matrix, say $\bo{x}_{s,i}^{(k)}$, is an $m$-dimensional data sample given by
\begin{align*}
	\bo{x}_{s,i}^{(k)}\coloneqq\big(x_{s,i}^{(k)}(1),x_{s,i}^{(k)}(2),\ldots,x_{s,i}^{(k)}(m-1),y_{s,i}^{(k)}\big),
\end{align*}
where the first $m-1$ elements are the inputs and the last entry $y_{s,i}^{(k)}$ is the output of the training data.

In the first step of the FL scheme at round $t=1$, the main server broadcasts a weight vector $\bo{w}^{(t)}\in\bb{R}^d$ to all base stations.
This vector is called the global model and can be initialized randomly. Then, each base station $s$ transmits this model to all of its edge devices. Let only a subset of the users in each cell are active and participate in the learning process. 
We denote the active users in cell $s$ by:
\begin{align}
	\bo{a}_s\coloneqq (a_{s,i})_{i\in\cc{U}_{s}}\in\{0,1\}^{U_{s}},\label{eqn_sch}
\end{align}
where $a_{s,i}=1$ indicates that user $i\in\cc{U}_s$ is scheduled~\cite{chen2020joint} to participate in the learning process and $a_{s,i}=0$, otherwise.

\begin{figure}[!t]
	\centering
	\huge
	\scalebox{0.5}{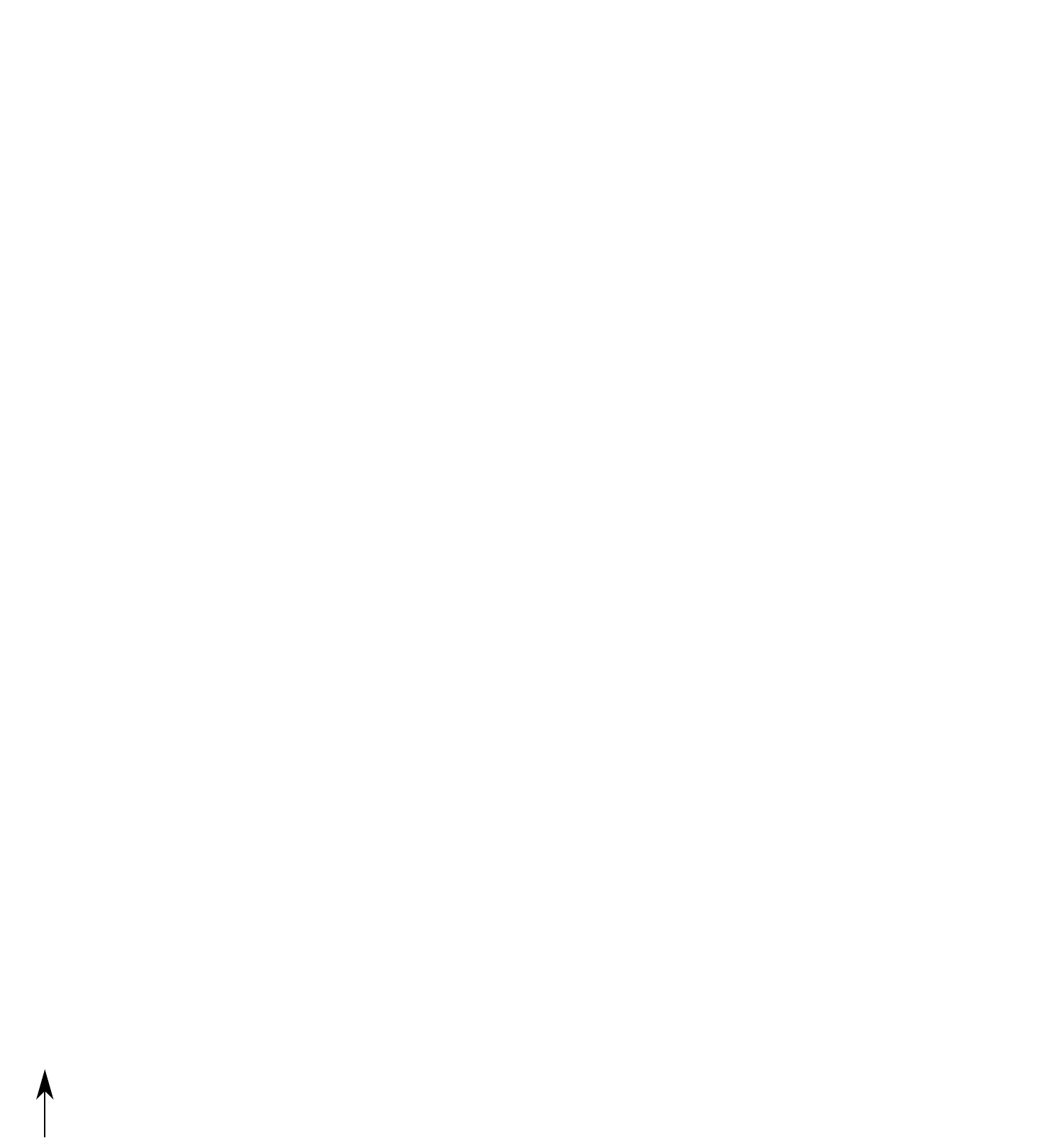}
	\caption{FL model with multiple base stations.}
	\label{fig_model}
\end{figure}
Fig.~\ref{fig_model} shows an example of a model with multiple base stations. In this example, all users of cell $s$ (depicted on the bottom of the figure), are scheduled to participate in the FL process and receive the vector $\bo{w}^{(t)}$.

Each scheduled user computes a local loss function depending on the ML algorithm that is applied in the system. We denote the loss function of a user $i\in\cc{U}_s$ by $l(\bo{w}^{(t)}, \bo{x}_{s,i}^{(k)})$, which is a function of the global model and its training sample. Next, this user computes the gradient~\cite{nesterov2003introductory,konevcny2016federated} of its loss function over all given samples as a query function
\begin{align}
	q_{s,i}^{(t)}(\bo{X}_{s,i})\coloneqq \frac{1}{K_{s,i}}\sum_{k=1}^{K_{s,i}}\nabla l(\bo{w}^{(t)}, \bo{x}_{s,i}^{(k)}),\label{eqn_sdf}
\end{align}
where the gradients are with respect to $\bo{w}^{(t)}$. 

Similarly as in~\cite{hu2020concentrated}, the user then applies Gaussian noise $\bo{n}^{(t)}_{s,i}\sim\cc{N}(\bo{0},\sigma_{s,i}^2\bo{I}_d)$ to the outcome of the query to implement the randomized mechanism $\ff{M}_s^{(t)}$ as follows
\begin{align}
	\ff{M}_{s,i}^{(t)}(\bo{X}_{s,i})\coloneqq q_{s,i}^{(t)}(\bo{X}_{s,i})+\bo{n}^{(t)}_{s,i}.\label{eqn_upd}
\end{align}
In this context, $\bo{n}^{(t)}_{s,i}$ is assumed to be independent of all other random variables in our model, including the DP noise that is applied in previous iterations. The main reason of applying Gaussian noise is that it gives tight bounds when applied with zCDP~\cite{bun2016concentrated}.
The following vector denotes the noise standard deviations of the users in the cell $s$:
\begin{align}\label{eqn_qlm}
	\bs{\sigma}_s\coloneqq (\sigma_{s,i})_{i\in\cc{U}_s}.
\end{align}

Next, the edge device $i\in\cc{U}_s$ updates its local model by
\begin{align}
	\bo{w}_{s,i}^{(t+1)}\coloneqq\bo{w}^{(t)}-\lambda\ff{M}_{s,i}^{(t)}(\bo{X}_{s,i}),\label{eqn_lcm}
\end{align}
where $\lambda>0$ is the learning step size. It then transmits its updated model\footnote{For the sake of simplicity, we consider only a single local update at each edge device in each round and also consider the batch gradient descent.} $\bo{w}_{s,i}^{(t+1)}$ to its corresponding base station $s$.

In the next step, base station $s$ aggregates all received updated models $\bo{w}_{s,i}^{(t+1)}$ as given below:
\begin{align}
	\bo{w}_s^{(t+1)}\coloneqq \frac{1}{\sum_{i\in\cc{U}_s}K_{s,i}a_{s,i}}\sum_{i\in\cc{U}_s}K_{s,i}a_{s,i}\bo{w}_{s,i}^{(t+1)},\label{eqn_agg}
\end{align}
where $a_{s,i}$ is the scheduling parameter and was defined as an element of the vector $\bo{a}_s$ in Eq.~\eqref{eqn_sch}.

Consequently, all base stations send their aggregated models to the main server. There, the global model at round $t+1$ is computed as follows
\begin{align}
\bo{w}^{(t+1)}\coloneqq \frac{1}{K_{\rr{a}}}\sum_{s\in\cc{S}}\sum_{i\in\cc{U}_s}K_{s,i}a_{s,i}\bo{w}_s^{(t+1)},\label{eqn_ag2}
\end{align}
where 
\begin{align}\label{eqn_tr0}
K_{\rr{a}}\coloneqq\sum_{s\in\cc{S}}\sum_{i\in\cc{U}_s}K_{s,i}a_{s,i}
\end{align} 
is the total number of training samples of all scheduled users.

\floatstyle{spaceruled}
\restylefloat{algorithm}
\begin{algorithm}[t]
	\caption{Privacy preserving FL with multiple stations}\label{alg_dfl}
	\begin{algorithmic}[1]
		\State The main server broadcasts $(\bo{a}_s,\bs{\sigma}_s)_{s\in\cc{S}}$, which are given by~\eqref{eqn_sch} and~\eqref{eqn_qlm}, to all base stations and their users.\\ 
		The main server initializes the global model $\bo{w}^{(0)}$.
		\For{$t=0:T$}
		\State The main server broadcasts $\bo{w}^{(t)}$ to all base stations.
		\For{ base stations $s\in\cc{S}$ in parallel}
		\State Base station $s$ broadcasts $\bo{w}^{(t)}$ to all its users.
		\For{users $i\in\cc{U}_s$ in parallel}
		\If{$a_{s,i}=1$}
		\State The user $i\in\cc{U}_s$ updates its model as in~\eqref{eqn_lcm}.
		\State The user $i\in\cc{U}_s$ then sends $\bo{w}_{s,i}^{(t+1)}$ back to the base station $s$.
		\EndIf
		\EndFor
		\State The base station $s$ aggregates the received models as in~\eqref{eqn_agg}.
		\State The base station $s$ then sends $\bo{w}_s^{(t+1)}$ back to the main server.
		\EndFor
		\State The main server aggregates all models as in~\eqref{eqn_ag2}.
		\EndFor
	\end{algorithmic}
\end{algorithm}

Next, the main server broadcasts the new global model $\bo{w}^{(t+1)}$ to the base stations where it is then forwarded further to their corresponding users. This process continues for a given number of $T$ iterations. Algorithm~\ref{alg_dfl} summarizes these steps, where $(\bo{a}_s,\bs{\sigma}_s)_{s\in\cc{S}}$ are assumed to be shared with all participants at the beginning of the learning process.

One difference between Algorithm~\ref{alg_dfl} and other approaches like e.g. the FL schemes in~\cite{hu2020concentrated,chen2020joint} is that here the aggregation is done in two steps. Additionally, the noise standard deviations $\sigma_{s,i}$ at users are not necessarily identical here and a joint optimal user scheduling and DP noise adjustment is possible.

To characterize the DP noise, we need to make the following assumption which can be achieved in practice by weight clipping~\cite{abadi2016dldp,hu2020concentrated}.

\begin{asm}\label{asm_inp}
	The gradients of the local loss functions are always upper bounded:
	\begin{align*}
		\|\nabla l(\bo{w}, \bo{x})\|_2 \leq L.
	\end{align*}
\end{asm}

In~\cite{hu2020concentrated}, it was shown that if Assumption~\ref{asm_inp} holds, then after $T$ iterations, a mechanism like $\ff{M}_{s,i}^{(t)}(\bo{X}_{s,i})$ is $\rho$-zCDP where
\begin{align}
	\rho = 2T\left(\frac{L}{K_{s,i}\sigma_{s,i}}\right)^2\label{eqn_sli}
\end{align}
is the privacy leakage.

Note that the DP noise affects the convergence of the learning process as well. In Section~\ref{sec_con}, we study the convergence behavior of Algorithm~\ref{alg_dfl} with respect to the vector parameters $\bo{a}_s$ and $\bs{\sigma}_s$. 

\section{Convergence Analysis}
\label{sec_con}
In the following, we define the global loss as a function of local losses. We then derive an upper bound on the optimality gap that appears in each round of Algorithm~\ref{alg_dfl}.

The global loss function is computed over all base stations and is given by
\begin{align}
f(\bo{w}^{(t)})\coloneqq\frac{1}{K}\sum_{s\in\cc{S}}\sum_{i\in\cc{U}_s}
\sum_{k=1}^{K_{s,i}}l(\bo{w}^{(t)}, \bo{x}_{s,i}^{(k)}),\label{eqn_bms}
\end{align}
where $K=\sum_{s\in\cc{S}}\sum_{i\in\cc{U}_s}K_{s,i}$ is the total number of samples (including scheduled or non-scheduled).

The following assumptions are necessary to analyze the global loss function $f$ and have been used before in the literature~\cite{friedlander2012hybrid}.

\begin{asm}\label{eqn_khs}
	The loss function $f:\bb{R}^d\rightarrow \bb{R}$ has a minimum value, i.e., there exists an input vector $\bo{w}^*=\arg\min_{\bo{w}\in\bb{R}^d}(f(\bo{w}))$.
\end{asm}

\begin{asm}\label{asm_ulp}
	The gradient $\nabla f (\bo{w})$ is uniformly $L$-Lipschitz continuous with respect to the model $\bo{w}$, i.e.,
	\begin{align*}
	\|\nabla f(\bo{w})-\nabla f(\bo{w}')\|_2\leq L\|\bo{w}-\bo{w}'\|_2\quad\text{ for all }\bo{w},\bo{w}'\in\bb{R}^d.
	\end{align*}
\end{asm}

\begin{asm}\label{asm_scx}
	The loss function $f:\bb{R}^d\rightarrow \bb{R}$ is $\mu$-strongly convex, i.e.,
	\begin{align*}
	f(\bo{w})\geq f(\bo{w}')+(\bo{w}-\bo{w}')^\intercal\nabla f(\bo{w}')+\frac{1}{2}\mu\|\bo{w}-\bo{w}'\|_2^2
	\end{align*}
	holds for all $\bo{w},\bo{w}'\in\bb{R}^d$.
\end{asm}

\begin{asm}\label{asm_twc}
	The loss function $f:\bb{R}^d\rightarrow \bb{R}$ is twice continuously differentiable. Then, Assumptions~\ref{asm_ulp} and~\ref{asm_scx} are equivalent to the following:
	\begin{align*}
	\mu\bo{I}_d\preceq \nabla^2 f(\bo{w})\preceq L\bo{I}_d\qquad\text{ for all }\bo{w}\in\bb{R}^d.
	\end{align*}
\end{asm}

\begin{asm}\label{asm_tw1}	
	There exists constants $\xi_1\geq 0$ and $\xi_2\geq 1$, such that for any training sample $\bo{x}$ and model $\bo{w}\in\bb{R}^d$, the following inequality holds
	\begin{align*}
	\|\nabla l(\bo{w}, \bo{x})\|_2^2\leq\xi_1+\xi_2\|\nabla f(\bo{w})\|_2^2.
	\end{align*}
\end{asm}

Now, we are ready to derive an upper bound on the optimality gap of Algorithm~\ref{alg_dfl}.

\begin{thm}\label{thm_con}
	Let Assumptions~\ref{eqn_khs}--\ref{asm_tw1} hold. Then, 
	the following upper bound on the optimality gap for Algorithm~\ref{alg_dfl} holds:
	\begin{align*}
		\bb{E}\big[f(\bo{w}^{t+1})-f(\bo{w}^*)\big]\leq C_1\bb{E}\Big[f(\bo{w}^{(t)})-f(\bo{w}^*)\Big]+C_2
		+C_3,
	\end{align*}
	where the expectation is taken over the DP noise and
	\begin{align*}
	C_1&=1-\frac{\mu}{L}+\frac{4\xi_2}{K^2}\bigg(
	\sum_{s\in\cc{S}}\sum_{i\in\cc{U}_s}K_{s,i}
	\big(1-a_{s,i}\big)\bigg)^2,\nonumber\\
	C_2&=\frac{2\xi_1}{LK^2}
	\bigg(
	\sum_{s\in\cc{S}}\sum_{i\in\cc{U}_s}K_{s,i}
	\big(1-a_{s,i}\big)\bigg)^2,\nonumber\\
	C_3&=\frac{d}{2L}\sum_{s\in\cc{S}}
	\sum_{i\in\cc{U}_s}\Big(\frac{K_{s,i}a_{s,i}}{K_{\rr{a}}}\sigma_{s,i}\Big)^2.
	\end{align*}
\end{thm}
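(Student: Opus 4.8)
The plan is to start from the $L$-smoothness of $f$ (Assumption~\ref{asm_ulp}) and expand $f(\bo{w}^{(t+1)})$ around $\bo{w}^{(t)}$. Writing $\bo{w}^{(t+1)}=\bo{w}^{(t)}-\lambda\bo{g}^{(t)}$, where $\bo{g}^{(t)}$ is the aggregate of the noisy scheduled updates $\ff{M}_{s,i}^{(t)}(\bo{X}_{s,i})$ (weighted by $K_{s,i}a_{s,i}/K_{\rr a}$), the smoothness inequality gives $f(\bo{w}^{(t+1)})\le f(\bo{w}^{(t)})-\lambda\nabla f(\bo{w}^{(t)})^\intercal\bo{g}^{(t)}+\tfrac{L\lambda^2}{2}\|\bo{g}^{(t)}\|_2^2$. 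I would choose the step size $\lambda=1/L$ at the outset, which is the natural choice that produces the $1-\mu/L$ factor. Then I take the expectation over the DP noise of that round, conditioned on $\bo{w}^{(t)}$; since $\bb{E}[\bo{n}^{(t)}_{s,i}]=\bo{0}$ and the noises are independent, $\bb{E}[\bo{g}^{(t)}]$ equals the weighted average of the true gradients $q_{s,i}^{(t)}(\bo{X}_{s,i})$ over scheduled users, and $\bb{E}\|\bo{g}^{(t)}\|_2^2$ splits into $\|\bb{E}[\bo{g}^{(t)}]\|_2^2$ plus the variance term $\sum_{s,i}(K_{s,i}a_{s,i}/K_{\rr a})^2\, d\,\sigma_{s,i}^2$, which is exactly $2LC_3$ after multiplying by $\tfrac{L\lambda^2}{2}=\tfrac{1}{2L}$. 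That isolates $C_3$ cleanly.

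The crux is controlling the "bias" caused by scheduling only a subset of users: the weighted average of scheduled gradients is not $\nabla f(\bo{w}^{(t)})$, which is the average over \emph{all} $K$ samples. I would write $\nabla f(\bo{w}^{(t)})=\tfrac{1}{K}\sum_{s,i}\sum_k\nabla l(\bo{w}^{(t)},\bo{x}_{s,i}^{(k)})$ and note that $\bb{E}[\bo{g}^{(t)}]=\tfrac{1}{K_{\rr a}}\sum_{s,i}\sum_k a_{s,i}\nabla l(\bo{w}^{(t)},\bo{x}_{s,i}^{(k)})$. The difference $\nabla f(\bo{w}^{(t)})-\bb{E}[\bo{g}^{(t)}]$ — or more precisely the cross term $-\lambda\nabla f^\intercal\bb{E}[\bo{g}^{(t)}]$ versus the $-\lambda\|\nabla f\|_2^2$ one would want — must be bounded using Cauchy--Schwarz and Assumption~\ref{asm_tw1}. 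The factor $\big(\sum_{s,i}K_{s,i}(1-a_{s,i})\big)^2/K^2$ appearing in $C_1$ and $C_2$ strongly suggests that the gap between $K_{\rr a}$-normalization and $K$-normalization is rewritten as a sum over \emph{non-scheduled} users: roughly, $\nabla f(\bo{w}^{(t)})$ minus the scheduled-user average (with the $K$-weights) equals $\tfrac{1}{K}\sum_{s,i}\sum_k(1-a_{s,i})\nabla l$, whose norm is bounded via the triangle inequality and Assumption~\ref{asm_tw1} by something proportional to $\big(\sum K_{s,i}(1-a_{s,i})\big)$ times $\sqrt{\xi_1+\xi_2\|\nabla f\|_2^2}$. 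Squaring and using $\sqrt{a+b}\le\sqrt a+\sqrt b$ (or $(a+b)\le 2a+2b$ after squaring) should split the bound into a $\xi_1$-part (giving $C_2$) and a $\xi_2\|\nabla f(\bo{w}^{(t)})\|_2^2$-part (absorbed into $C_1$ after applying strong convexity). I expect I will need the bound $\|\nabla f(\bo{w}^{(t)})\|_2^2\ge 2\mu\big(f(\bo{w}^{(t)})-f(\bo{w}^*)\big)$, the Polyak--{\L}ojasiewicz-type inequality implied by $\mu$-strong convexity (Assumption~\ref{asm_scx}), to turn the gradient-norm terms into the optimality gap $f(\bo{w}^{(t)})-f(\bo{w}^*)$ that appears multiplied by $C_1$.

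After these substitutions I would collect terms: the clean $-\lambda\|\nabla f\|_2^2$ contribution combined with $\tfrac{L\lambda^2}{2}\|\bb{E}[\bo{g}]\|_2^2$ and the PL inequality yields the $1-\mu/L$ piece of $C_1$; the scheduling-bias cross term contributes the $\tfrac{4\xi_2}{K^2}(\sum K_{s,i}(1-a_{s,i}))^2$ correction to $C_1$ and the $C_2$ term; and the noise variance contributes $C_3$. Finally I subtract $f(\bo{w}^*)$ from both sides and take a further (tower) expectation over the earlier rounds' randomness to obtain the stated inequality in terms of $\bb{E}[f(\bo{w}^{(t)})-f(\bo{w}^*)]$. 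The main obstacle I anticipate is bookkeeping the constants so that the scheduling-bias term lands exactly as $4\xi_2/K^2$ and $2\xi_1/(LK^2)$ rather than with different numerical factors — this depends on precisely which inequalities ($2ab\le a^2+b^2$ with which weighting, $(a+b)^2\le 2a^2+2b^2$, etc.) are used and in what order, and on whether a $\lambda\|\nabla f\|_2^2$ term is available to absorb a leftover cross term; getting the step-size choice and these elementary inequalities to line up is the delicate part, while everything else is routine.
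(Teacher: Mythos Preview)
Your proposal is essentially the paper's proof: the paper organizes the same computation via the error $\Delta^{(t)}\coloneqq\nabla f(\bo{w}^{(t)})+\tfrac{1}{\lambda}(\bo{w}^{(t+1)}-\bo{w}^{(t)})=\nabla f(\bo{w}^{(t)})-\bo{g}^{(t)}$, so that with $\lambda=1/L$ the smoothness inequality collapses (via $\|u-v\|_2^2=\|u\|_2^2+\|v\|_2^2-2u^\intercal v$) to $f(\bo{w}^{(t+1)})-f(\bo{w}^*)\le f(\bo{w}^{(t)})-f(\bo{w}^*)+\tfrac{1}{2L}\big[\|\Delta^{(t)}\|_2^2-\|\nabla f(\bo{w}^{(t)})\|_2^2\big]$, and then bounds $\bb{E}\|\Delta^{(t)}\|_2^2$ by splitting off the noise variance (giving $C_3$) and writing the bias as $-\tfrac{K-K_{\rr a}}{KK_{\rr a}}\sum_{a_{s,i}=1}\sum_k\nabla l+\tfrac{1}{K}\sum_{a_{s,i}=0}\sum_k\nabla l$, whose norm after the triangle inequality and Assumption~\ref{asm_tw1} is at most $\tfrac{2(K-K_{\rr a})}{K}\sqrt{\xi_1+\xi_2\|\nabla f\|_2^2}$, so squaring yields the $4/K^2$ and $2/(LK^2)$ factors directly with no auxiliary $(a+b)^2\le 2a^2+2b^2$ loss. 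One small correction to your plan: converting the $+\xi_2\|\nabla f\|_2^2$ contribution into a multiple of $f(\bo{w}^{(t)})-f(\bo{w}^*)$ requires the \emph{upper} bound $\|\nabla f(\bo{w})\|_2^2\le 2L\big(f(\bo{w})-f(\bo{w}^*)\big)$ from $L$-smoothness, not strong convexity (which supplies only the lower bound you correctly invoke on the $-\|\nabla f\|_2^2$ term).
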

\begin{proof}
	The proof is provided in the Appendix.
\end{proof}

Theorem~\ref{thm_con} shows that the expected difference between the global loss and the optimal value $f(\bo{w}^*)$ per iteration is upper bound by expressions that depend on $C_1$, $C_2$, and $C_3$. Hence, by lowering the values of $C_1$, $C_2$, and $C_3$, the convergence of Algorithm~\ref{alg_dfl} should be improved.
In addition, Theorem~\ref{thm_con} shows that the upper bound on the optimality gap is influenced by the scheduling parameters $a_{s,i}$ and DP noise standard deviations $\sigma_{s,i}$. We also observe that the upper bound converges only if $C_1<1$. In Section~\ref{sec_wir}, we design an optimal scheduler and a DP optimizer based on these variables and their effect on this upper bound.

\section{Learning over Wireless Channels with Inter-cell Interference}\label{sec_wir}
In this section, we consider other wireless parameters of the communication system and connect them to the notion of learning. These wireless parameters include resource allocation, transmit power consumption, fading channels, inter-cell interference, and communication rate.

We assume that the users apply an Orthogonal Frequency-Division Multiple Access (OFDMA) technique in the uplink channel to transmit data to their corresponding base station. In this case, each edge device $i\in\cc{U}_s$ is assigned a resource block indexed by $n\in[R]$ where $R$ is the total number of available uplink transmission resource blocks in each cell.

We define the uplink resource allocation matrix $\bo{R}_s$ in a given cell $s$ as
\begin{align}\label{eqn_mat}
\bo{R}_s\coloneqq
\big(r_{s,i}^{(1)}, r_{s,i}^{(2)}, \dots, r_{s,i}^{(R)}\big)_{i\in\cc{U}_{s}},
\end{align}
where $r_{s,i}^{(n)}\in\{0,1\}$. Each row of this matrix represents the resource allocation for a user $i\in\cc{U}_s$. In this case, $r_{s,i}^{(n)}=1$ indicates that the edge device $i\in\cc{U}_s$ uses resource block $n$ in the uplink transmission and $r_{s,i}^{(n)}=0$, otherwise. Moreover, we assume that each active user ($a_{s,i}=1$) is assigned only one resource block and inactive users ($a_{s,i}=0$) are not assigned any resource block at all, i.e.,
\begin{align}\label{opt_22}
\sum_{n=1}^{R} r_{s,i}^{(n)}=a_{s,i}, \forall i\in\cc{U}_s,s\in\cc{S}.
\end{align}

In addition, edge devices in a given cell $s$ do not interfere with each other, i.e.,
\begin{align}\label{eqn_sfi}
	\sum_{i\in\cc{U}_s}r_{s,i}^{(n)} \leq 1,\forall n\in[R],s\in\cc{S}.
\end{align}

To be able to formulate the communication rate, we need first to define the transmit powers of the users. Let the uplink transmit power vector of all edge devices at a given cell $s\in\cc{S}$ be denoted by
\begin{align*}
\bo{p}_s\coloneqq (p_{s,i})_{i\in\cc{U}_s},
\end{align*}
where $p_{s,i}$ denotes the transmit power of the user $i\in\cc{U}_s$. Moreover, the maximum transmit power of each user in any cell is denoted by $P_{\max}$. 

\begin{figure}[!t]
	\centering
	\Large
	\scalebox{0.62}{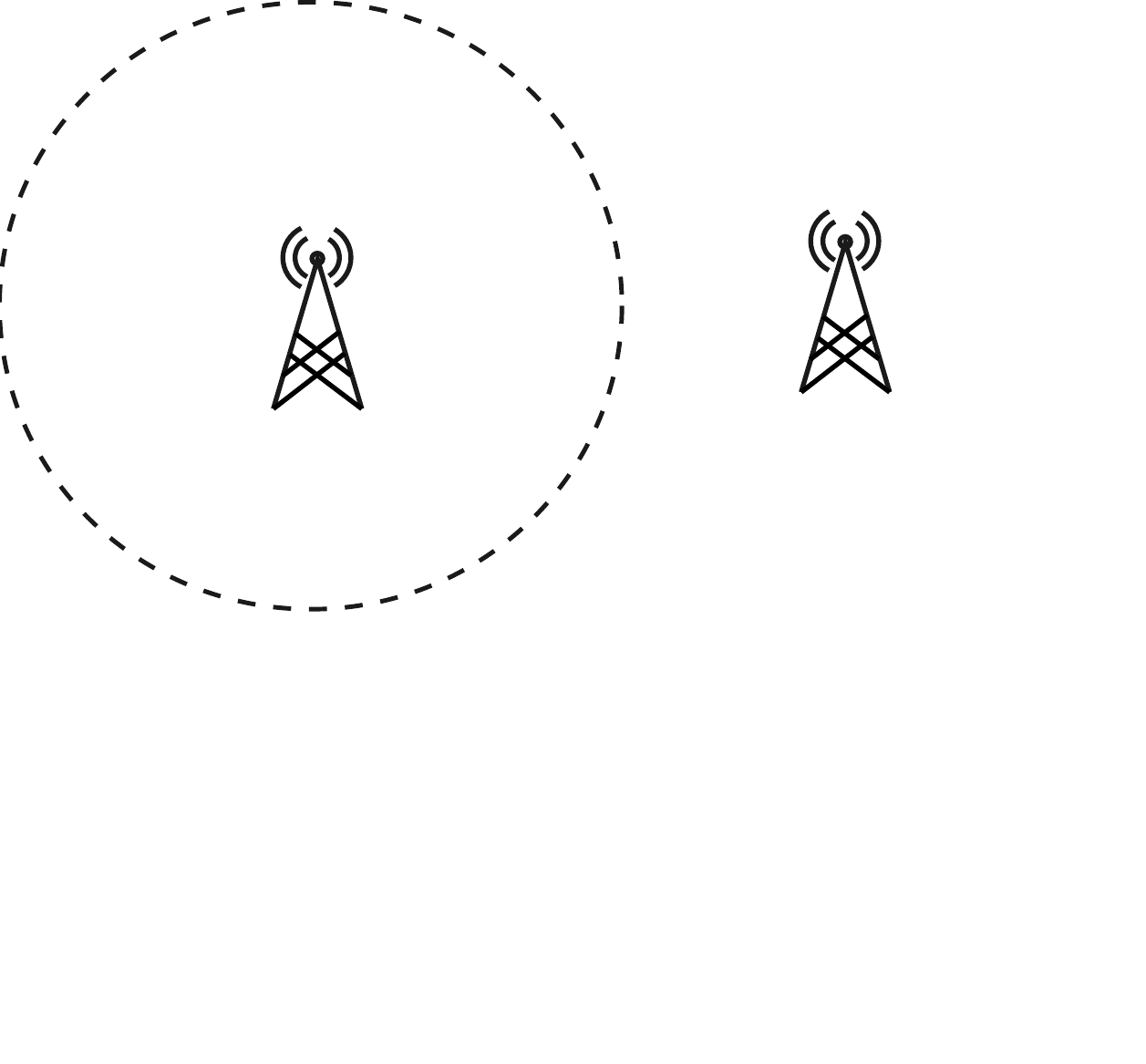}
	\caption{Uplink stage of the wireless FL model with multiple base stations and inter-cell interference.}
	\label{fig_base}
\end{figure}

Another wireless parameter, which is of great importance in the considered system with multiple base stations, is the inter-cell interference.
Let $I_{s}^{(n)}(\tilde{s})$ denote the interference signal power~\cite{moretti2007ofdm} from the cell $\tilde{s}\in\cc{S}\backslash\{s\}$ that affects the uplink signal received by the base station $s$ on the resource block $n$. In this case, inequality~\eqref{eqn_sfi} implies that $I_{s}^{(n)}(\tilde{s})$ is a factor of the transmit power of only one user in the cell $\tilde{s}$ that transmits signals on the resource block $n$. In other words, the received interference signal power can be formulated as
\begin{align}\label{eqn_int}
    I_{s}^{(n)}(\tilde{s}) = \sum_{i\in\cc{U}_{\tilde{s}}} h_{s,i} r_{\tilde{s},i}^{(n)}p_{\tilde{s},i}.
\end{align}
The term $h_{s,i}$ in~\eqref{eqn_int} is the channel gain between the user $i\in\cc{U}_{\tilde{s}}$ and the base station $s$ and can be computed by determining the pathloss~\cite{rappaport1995pro}. The channel gain is given by
\begin{align}
	h_{s,i} = l^2\Big(\frac{c}{4\pi f}\Big)^2\Big(\frac{1}{d_{s,i}}\Big)^3,\label{eqn_cgn}
\end{align}
where $f$ is the uplink center frequency, $d_{s,i}$ is the distance between the user $i\in\cc{U}_{\tilde{s}}$ and base station $s$, $l$ is the output of a Rayleigh distribution with a unit scale parameter, and $c$ is the speed of light.

Fig.~\ref{fig_base} illustrates an example of a wireless communication system with three base stations $s,s',$ and $s''$ in the uplink stage. In this example, the received signals on the resource block $n$ at the base station $s'$ are affected by the interference signal power $I_{s'}^{(n)}(s)$ from cell $s$. Furthermore, base station $s''$ is affected by the interference signal power $I_{s''}^{(n)}(s')$ from cell $s'$.

Let the uplink fading channel between each user $i\in\cc{U}_s$ and its corresponding base station $s$ be fixed and equal to $h_{s,i}$. Also assume that the uplink bandwidth is denoted by $B$. Furthermore, all participants are assumed to have perfect channel knowledge. It is known (see e.g.~\cite{moretti2007ofdm,chen2020joint}) that the maximum uplink communication rate between each user $i\in\cc{U}_s$ and its corresponding base station $s$ can be formulated as
\begin{align}\label{eqn_uls}
c_{s,i}^\rr{U}&\coloneqq \sum_{n=1}^{R}r_{s,i}^{(n)}B\log
\bigg(1+\frac{p_{s,i}h_{s,i}}{\sum_{\tilde{s}\in\cc{S}\backslash\{s\}}I_{s}^{(n)}(\tilde{s})+BN_0}\bigg),
\end{align}
where $N_0$ is the thermal noise power spectral density. We assume that the minimum required uplink communication rate between each user and its base station is denoted by a constant $R_{\min}$.

Before applying these wireless parameters in the learning process, we first need to introduce another measure based on the DP standard deviations $(\bs{\sigma}_s)_{s\in\cc{S}}$. In this regard, we define the total privacy leakage as follows: 
\begin{align}\label{eqn_soi}
2TL^2\sum_{s\in\cc{S}}\sum_{i\in\cc{U}_s}\big(\frac{1}{K_{s,i}\sigma_{s,i}}\big)^2 a_{s,i}.
\end{align}
In this definition, the summands in~\eqref{eqn_soi} are computed by multiplying the privacy leakage given by~\eqref{eqn_sli} at each user with the scheduling variables $a_{s,i}$.
Minimizing this measure reduces the individual privacy leakage as well. This improvement is due to the systematic adjustment of the DP noise power at each user. In this case, edge devices who have a larger number of samples $K_{s,i}$ are assigned less DP noise power.

The parameters $\bo{R}_s$ and $\bs{\sigma}_s$ play a critical role in improving the convergence rate of Algorithm~\ref{alg_dfl} (cf. Theorem~\ref{thm_con}) and reducing the total privacy leakage. Furthermore, the parameter $\bo{p}_s$ is critical in establishing a reliable communication. By minimizing $C_1$ and $C_2$ with respect to these parameters, the upper bound on the optimality gap in Theorem~\ref{thm_con} reduces and thus the convergence rate of the FL procedure should improve. To this end, it is sufficient to minimize only the expressions inside the squared term in $C_1$. 

Therefore, we propose an optimization problem over the variables $(\bo{R}_s,\bo{p}_s,\bs{\sigma}_{s})_{s\in\cc{S}}$ and minimize the values of $C_1$ and $C_2$ from Theorem~\ref{thm_con} and the total privacy leakage given by~\eqref{eqn_soi}. In this combined formulation, we assume that other FL parameters, such as $L,\mu,\xi_1,\xi_2,d,$ and $T$, are constant. The main server can then solve this optimization problem and then broadcast the results to all base stations before Algorithm~\ref{alg_dfl} starts. 

Since it is hard to directly solve a multi-objective optimization problem for both scheduling and total privacy leakage, we formulate the problem as a single-objective optimization task as follows:
\begin{align}
&\underset{(\bo{R}_s,\bo{p}_s,\bs{\sigma}_{s})_{s\in\cc{S}}}{\text{minimize}}
\quad\sum_{s\in\cc{S}}\sum_{i\in\cc{U}_s}K_{s,i}
\bigg[1-\sum_{n=1}^{R} r_{s,i}^{(n)}\bigg]
\nonumber \\
& \qquad\qquad\qquad\qquad +\gamma\sum_{s\in\cc{S}}\sum_{i\in\cc{U}_s}\big(\frac{1}{K_{s,i}\sigma_{s,i}}\big)^2\sum_{n=1}^{R} r_{s,i}^{(n)}\label{opt_ob1}\\
& \text{subject to}\nonumber\\
& \sum_{s\in\cc{S}}
\sum_{i\in\cc{U}_s}K_{s,i}\sigma_{s,i}^2\sum_{n=1}^{R} r_{s,i}^{(n)}\leq V_{\max}\!\sum_{s\in\cc{S}}\sum_{i\in\cc{U}_s}K_{s,i}\sum_{n=1}^{R} r_{s,i}^{(n)},\label{opt_1}\\
& K_{s,i}\sigma_{s,i}\geq N_{\min}\sum_{n=1}^{R} r_{s,i}^{(n)},&&\hspace{-9em}\forall s\in\cc{S},i\in\cc{U}_s,\label{opt_6}\\
& \sum_{i\in\cc{U}_s}r_{s,i}^{(n)} \leq 1,&&\hspace{-9em}\forall s\in\cc{S}, n\in[R],\label{opt_3}\\
&\sum_{n=1}^{R} r_{s,i}^{(n)} \leq 1\; \text{ and }\;  r_{s,i}^{(n)}\in\{0,1\},&&\hspace{-9em}\forall s\in\cc{S},i\in\cc{U}_s,\label{opt_33}\\
& 0\leq p_{s,i}\leq P_{\max},&&\hspace{-9em}\forall s\in\cc{S},i\in\cc{U}_s,\label{opt_4}\\ & c_{s,i}^\rr{U}\geq R_{\min}\sum_{n=1}^{R} r_{s,i}^{(n)},&&\hspace{-9em}\forall s\in\cc{S},i\in\cc{U}_s,\label{opt_5}
\end{align}
where $\gamma>0$ is a constant and is used to balance the optimization of the scheduling and the total privacy leakage. Typically, the value of the constant $\gamma$ can be obtained by hyperparameter tuning and simulations. 

Minimizing the first term in the objective function in~\eqref{opt_ob1} improves the convergence of Algorithm~\ref{alg_dfl} and is computed by applying~\eqref{opt_22} to the summation term in $C_1$ of Theorem~\ref{thm_con}. Minimizing the second term, on the other hand, reduces the total privacy leakage at all users and is given by~\eqref{eqn_soi}.

Constraint~\eqref{opt_1} guarantees that the DP noise error, which is characterized by the term $C_3$ of Theorem~\ref{thm_con}, is less than a given constant $V_{\max}$. To derive condition~\eqref{opt_1}, we first consider the following upper bound on the squared term in $C_3$ 
\begin{align}
\Big(\frac{K_{s,i}a_{s,i}}{K_{\rr{a}}}\sigma_{s,i}\Big)^2\leq\frac{K_{s,i}a_{s,i}}{K_{\rr{a}}}\sigma_{s,i}^2,
\end{align}
which follows by~\eqref{eqn_tr0} and $K_{s,i}a_{s,i}\leq K_{\rr{a}}$. Constraint~\eqref{opt_1} then follows by applying~\eqref{eqn_tr0} and~\eqref{opt_22} to this upper bound and setting it to be smaller than $V_{\max}$. We can then control the amount of DP noise variance and its error by adjusting the constant $V_{\max}$. 

Conditions~\eqref{opt_3}-\eqref{opt_33} provide the resource allocation constraints, whereas~\eqref{opt_4}-\eqref{opt_5} restrict the transmit power to a maximum amount $P_{\rr{max}}$ and ensure a minimum communication rate $R_{\rr{min}}$ for each user in each cell, respectively.
Finally, constraint~\eqref{opt_6} guarantees an upper bound on the privacy leakage of the users individually due to~\eqref{eqn_sli}. In this case, the constant $N_{\min}$ controls the minimum amount of DP noise at each user. 

We notice that the variable $\bo{a}_s$ and $\bo{R}_s$, which are given by~\eqref{eqn_sch} and~\eqref{eqn_mat}, are related due to~\eqref{opt_22}. Therefore, $\bo{a}_s$ does not appear as a minimization variable. 

The optimization problem in~\eqref{opt_ob1} is 	not easy to solve. However, we can subdivide it into simpler problems and search for (sub-)optimal solutions. The main server can then compute and broadcast these (sub-)optimal $(\bo{R}_s^*,\bo{p}_s^*,\bs{\sigma}_{s}^*)_{s\in\cc{S}}$ to all base stations where they can be forwarded to the users. These computations and initialization should be done prior to the beginning of Algorithm~\ref{alg_dfl}.

\section{Algorithm Design}
\label{sec_alg}
In this section, we propose two suboptimal sequential algorithms to solve the optimization problem in~\eqref{opt_ob1}. First, for fixed DP noise the objective function in~\eqref{opt_ob1} is minimized with respect to users' transmit powers and resource block allocation in a cell-by-cell manner. In the second part, with given transmit power and resource block allocation, the optimization problem in~\eqref{opt_ob1} becomes convex with respect to the DP noise standard deviations.

\subsection{Optimal Scheduler} \label{sub_rba}
Let the DP noise standard deviations be given such that condition~\eqref{opt_6} is always satisfied. In the following, we consider the joint transmit power and resource block allocation problem, which is a simplified version of~\eqref{opt_ob1}.
\begin{align}
& \underset{(\bo{R}_s,\bo{p}_s)_{s\in\cc{S}}}{\text{minimize}}
\quad\sum_{s\in\cc{S}}\sum_{i\in\cc{U}_s}K_{s,i}
\bigg[1-\sum_{n=1}^{R}r_{s,i}^{(n)}\bigg]
\nonumber\\
& \qquad\qquad\quad\quad +\gamma\sum_{s\in\cc{S}}\sum_{i\in\cc{U}_s}\big(\frac{1}{K_{s,i}\sigma_{s,i}}\big)^2\sum_{n=1}^{R} r_{s,i}^{(n)}\label{eqn_jww}\\
& \text{subject to~\cref{opt_1,opt_3,opt_33,opt_4,opt_5}}.\nonumber
\end{align}

The optimization problem in~\eqref{eqn_jww} is non-linear with respect to $(\bo{R}_s)_{s\in\cc{S}}$ due to~\eqref{opt_5} and~\eqref{eqn_uls}. To further simplify it, we first compute the optimal transmit powers while guaranteeing the minimum communication rate constraint. In this case, setting~\eqref{opt_5} to equality, combining it with~\eqref{eqn_uls}, and using the fact that $r_{s,i}^{(n)}\in\{0,1\}$, the optimal transmit powers can be obtained as
\begin{align}\label{opt_power}
&p_{s,i}^*=\sum_{n=1}^R r_{s,i}^{(n)}\Big(2^{\frac{R_{\min}}{B}}-1\Big)\frac{\sum_{{\tilde{s}}\in\cc{S}\backslash\{s\}}I_{s}^{(n)}(\tilde{s})+BN_0}{h_{s,i}}.
\end{align}

We then consider only one cell at each optimization step in an alternating strategy. Based on this approach and applying~\eqref{opt_power} to~\eqref{opt_4}, the optimization task in~\eqref{eqn_jww} reduces to the following linear integer programming problem for a single cell $s$:
\begin{align}
&\underset{\bo{R}_s}{\text{minimize}}
\quad\sum_{i\in\cc{U}_s}K_{s,i}
\bigg[1-\sum_{n=1}^{R}r_{s,i}^{(n)}\bigg]\nonumber\\
& \qquad\qquad\qquad\qquad\quad +\gamma\sum_{i\in\cc{U}_s}\big(\frac{1}{K_{s,i}\sigma_{s,i}}\big)^2\sum_{n=1}^{R} r_{s,i}^{(n)}\label{opt_ob3}\\
& \text{subject to}\nonumber\\
&\sum_{i\in\cc{U}_s}K_{s,i}\sigma_{s,i}^2\sum_{n=1}^{R}r_{s,i}^{(n)}+\sum_{{\tilde{s}}\in\cc{S}\backslash\{s\}}
\sum_{i\in\cc{U}_{\tilde{s}}}K_{\tilde{s},i}\sigma_{\tilde{s},i}^2\sum_{n=1}^{R}r_{\tilde{s},i}^{(n)}\nonumber\\
&\;\;\leq V_{\max}
\sum_{i\in\cc{U}_s}K_{s,i}\sum_{n=1}^{R}r_{s,i}^{(n)}+V_{\max}\!\!\!\sum_{{\tilde{s}}\in\cc{S}\backslash\{s\}}
\sum_{i\in\cc{U}_{\tilde{s}}}K_{{\tilde{s}},i}\sum_{n=1}^{R}r_{{\tilde{s}},i}^{(n)},\\
& \sum_{i\in\cc{U}_s}r_{s,i}^{(n)} \leq 1,
\forall n\in[R],\\
&\sum_{n=1}^{R} r_{s,i}^{(n)} \leq 1\; \text{ and }\;  r_{s,i}^{(n)}\in\{0,1\},
\forall i\in\cc{U}_s,\\
&0\leq\sum_{n=1}^R r_{s,i}^{(n)}\Big(2^{\frac{R_{\min}}{B}}- 1\Big)\frac{\sum_{\tilde{s}\in\cc{S}\backslash\{s\}}I_{s}^{(n)}(\tilde{s})+BN_0}{h_{s,i}}
\leq P_{\max},\nonumber\\&\forall i \in\cc{U}_s.
\end{align}

\begin{algorithm}[t]
	\caption{Random scheduler with random DP noise (RndSched)}\label{alg_non}
	\begin{algorithmic}[1]
		\State Initialize the values of $(\bo{R}_s,\bo{p}_s,\bs{\sigma}_{s})_{s\in\cc{S}}$ randomly such that they satisfy~\eqref{opt_1}-\eqref{opt_4}.
		\State Compute $(\bo{p}^*_s)_{s\in\cc{S}}$ by solving~\eqref{eqn_siw} and unschedule those users whose communication rates do not meet~\eqref{opt_5}.
		\State Output the resulting parameters as a (sub-)optimal solution $(\bo{R}_s,\bo{p}^*_s,\bs{\sigma}_{s})_{s\in\cc{S}}$.
	\end{algorithmic}
\end{algorithm}

To solve~\eqref{opt_ob3}, we assume that $(\bo{R}_{\tilde{s}},\bo{p}_{\tilde{s}})_{\tilde{s}\in\cc{S}\backslash\{s\}}$ are known and satisfy conditions~\eqref{opt_3}-\eqref{opt_4}. We then solve this problem with respect to $\bo{R}_s$ while taking $\bo{R}_{\tilde{s}}$ with $\tilde{s}\in\cc{S}\backslash\{s\}$ as constants. By solving this optimization problem for each cell, we obtain a (sub-)optimal scheduling solution $(\bo{R}^*_s)_{s\in\cc{S}}$ for the whole system.

In the next step, the optimal transmit powers $(\bo{p}_s)_{s\in\cc{S}}$ should be accordingly computed by using~\eqref{opt_power}. Yet, the term $\sum_{\tilde{s}\in\cc{S}\backslash\{s\}}I_{s}^{(n)}(\tilde{s})$ in~\eqref{opt_power} is itself a linear function of transmit powers of other users due to~\eqref{eqn_int}. In fact,~\eqref{opt_power} can be written as a linear equation system $\bo{Ap}=\bo{b}$ with unknown variables $\bo{p}$. In this case, $\bo{p}$ is a vector consisting of all transmit powers $p_{s,i}$ and $\bo{A}$ and $\bo{b}$ are the coefficients of the linear equation system given by~\eqref{opt_power}. To compute the optimal transmit powers, we solve the following simple optimization:
\begin{align}\label{eqn_siw}
	&\underset{\bo{p}}{\text{minimize}}
	\quad\|\bo{Ap}-\bo{b}\|_1\\
	& \text{subject to}\quad 0\leq p_{s,i}\leq P_{\max},\quad\forall s\in\cc{S},i\in\cc{U}_s.\nonumber
\end{align}

After finding the optimal powers from~\eqref{eqn_siw}, we can compute the uplink communication rates by using~\eqref{eqn_uls}. We then unschedule those users whose rates do not satisfy~\eqref{opt_5} and set their transmit power to zero.

\begin{algorithm}[t]
	\caption{Optimal scheduler with random DP noise (OptSched)}\label{alg_df2}
	\begin{algorithmic}[1]
		\State Initialize the values of $(\bo{R}_s,\bo{p}_s,\bs{\sigma}_{s})_{s\in\cc{S}}$ randomly such that they satisfy~\eqref{opt_1}-\eqref{opt_4}.
		\For{$s\in\cc{S}$}
		\State For fixed $(\bo{R}_{\tilde{s}},\bo{p}_{\tilde{s}})_{{\tilde{s}}\in\cc{S}\backslash\{s\}}$ and $(\bs{\sigma}_{s})_{s\in\cc{S}}$, obtain a \mbox{(sub-)optimal} resource block allocation matrix $\bo{R}^*_s$ by solving the  optimization problem in~\eqref{opt_ob3}.
		\EndFor
		\State Compute $(\bo{p}^*_s)_{s\in\cc{S}}$ by solving~\eqref{eqn_siw} and unschedule those users whose communication rates do not meet~\eqref{opt_5}.
		\State Output the resulting parameters as a (sub-)optimal solution $(\bo{R}^*_s,\bo{p}^*_s,\bs{\sigma}_{s})_{s\in\cc{S}}$.
	\end{algorithmic}
\end{algorithm}

Based on these solutions, we propose two procedures for user scheduling and DP noise adjustment. Algorithm~\ref{alg_non} presents a random scheduler (RndSched). Algorithm~\ref{alg_df2} provides an optimal scheduler (OptSched) based on~\eqref{opt_ob3}. Both algorithms benefit from the power allocation procedure based on~\eqref{eqn_siw} and both apply random DP noise to achieve privacy.

We note that one advantage of the optimal scheduler is that it is linear and therefore efficient from a practical point of view compared with~\eqref{eqn_jww}. Nevertheless, the drawback of this approach is that it is performed sequentially and cell by cell. As a result, there is no guarantee that this approach always provides us with an optimal solution. However, as we will see in Subsection~\ref{sec_num}, it delivers very good results compared with the randomized scheduler. In the next subsection, we extend this algorithm to include a DP optimizer.

\subsection{DP Optimizer}

Let the transmit powers and resource block allocation values from the optimal scheduler  $(\bo{R}^*_s,\bo{p}^*_s)_{s\in\cc{S}}$ be given. The DP noise optimization problem is then given by
\begin{align}
& \underset{(\bs{\sigma}_{s})_{s\in\cc{S}}}{\text{minimize}}
\quad \sum_{s\in\cc{S}}\sum_{i\in\cc{U}_s}\frac{\sum_{n=1}^{R} r_{s,i}^{(n)}}{K_{s,i}^2\sigma_{s,i}^2} \label{eqn_sil}\\
& \text{subject to }\eqref{opt_1}, \eqref{opt_6}.\nonumber
\end{align}

Since the objective function and all constraints in~\eqref{eqn_sil} are convex, the global optimal solution can be obtained by solving the Karush-Kuhn-Tucker (KKT)~\cite{bazaraa2013nonlinear} conditions.
The Lagrange function can be formulated as:
\begin{align*}
\mathcal L\big((\bs \sigma_{s})_{s\in\cc{S}}\big)
&=\sum_{s\in\cc{S}}\sum_{i\in\cc{U}_s}\frac{\sum_{n=1}^{R} r_{s,i}^{(n)}}{K_{s,i}^2\sigma_{s,i}^2}\\ &\quad+\kappa\bigg(\sum_{s\in\cc{S}}
\sum_{i\in\cc{U}_s}K_{s,i}\sigma_{s,i}^2\sum_{n=1}^{R}r_{s,i}^{(n)}\\&\qquad\qquad- V_{\max}\sum_{s\in\cc{S}}
\sum_{i\in\cc{U}_s}K_{s,i}\sum_{n=1}^{R}r_{s,i}^{(n)}\bigg),
\end{align*}
where $\kappa\geq0$ is a Lagrange multiplier.

Setting the derivative of $\mathcal L$ with respect to $\sigma_{s,i}$ to zero yields
\begin{align}
-\frac{2\sum_{n=1}^{R}r_{s,i}^{(n)}}{K_{s,i}^2\sigma_{s,i}^3}+2\kappa K_{s,i}\sum_{n=1}^{R}r_{s,i}^{(n)}\sigma_{s,i}=0.\label{eqn_iow}
\end{align}

Let $\sum_{n=1}^{R}r_{s,i}^{(n)}=1$ hold. It then follows by combing~\eqref{eqn_iow} with~\eqref{opt_6} that
\begin{align}\label{DP_opv1}
\sigma_{s,i}=\left.\left(K_{s,i}^3 \kappa\right)^{-\frac{1}{4}}\right|_{  \frac{N_{\min}}{K_{s,i}} },
\end{align}
where $a|_b=\max\{a,b\}, s=\cc{S},$ and $i\in\cc{U}_s$. 
If $\sum_{n=1}^{R}r_{s,i}^{(n)}=0$ holds, then we have $\sigma_{s,i}=0$.

\begin{figure}[!t]
	\centering
	\Large
	\includegraphics[scale=0.66]{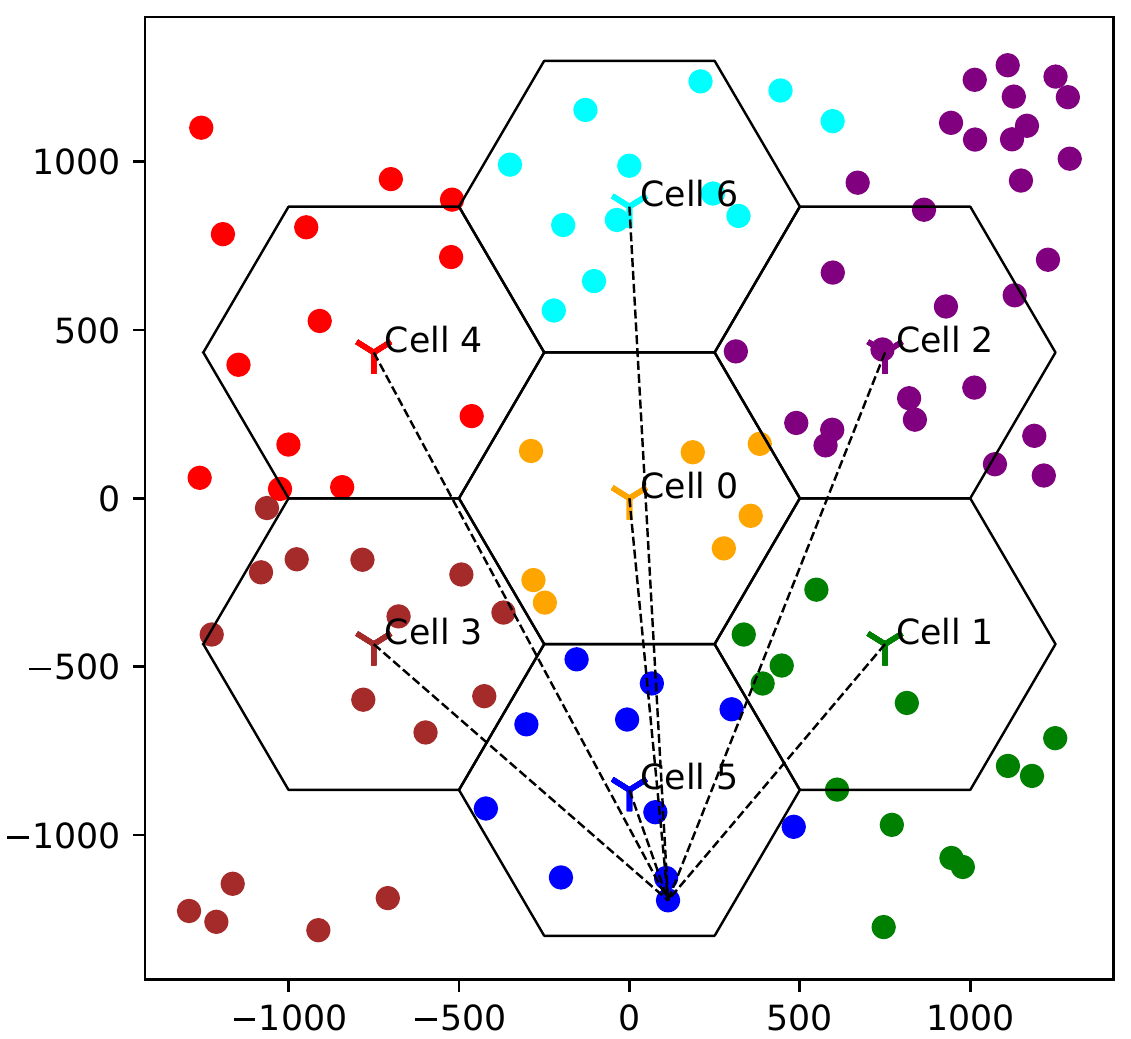}
	\caption{User and channel initialization with 100 users.}
	\label{fig_sim_ini}
\end{figure}

For the optimal solution, constraint \eqref{opt_1} always holds with equality since the objective function monotonically decreases with increasing $\sigma_{s,i}$. Consequently, substituting \eqref{DP_opv1} into \eqref{opt_1} with equality implies that
 \begin{align}\label{DP_opv2}
\sum_{s\in\cc{S}} \sum_{i\in\cc{U}_s}K_{s,i}&\sum_{n=1}^{R}r_{s,i}^{(n)}
\left.\left(K_{s,i}^3 \kappa \right)^{-\frac{1}{2}}\right|_{  \frac{N_{\min}^2}{K_{s,i}^2} }
\nonumber\\
&=V_{\max}\sum_{s\in\cc{S}}
\sum_{i\in\cc{U}_s}K_{s,i}\sum_{n=1}^{R}r_{s,i}^{(n)}.
\end{align}

After the value of $\kappa$ is found from~\eqref{DP_opv2}, the optimal $\sigma^*_{s,i}$ can be computed from~\eqref{DP_opv1}.
We then combine this scheme with the procedure in Subsection~\ref{sub_rba}. A summary of this scheme is provided in Algorithm~\ref{alg_df3} (OptSched+DP).

\section{Simulations and Numerical Solutions}

\subsection{Optimization Problems}\label{sec_num}
In this subsection, we present the numerical solutions of the algorithms that were presented in Section~\ref{sec_alg}. In this regard, we apply the Python optimization packages CVXPY, CVXOPT, GLPK, and ECOS~\cite{diamond2016cvxpy,agrawal2018rewriting,cvxopt,glpk,Domahidi2013ecos}.

Since Algorithms~\ref{alg_df2} and~\ref{alg_df3} are heuristic, their solutions depend on the initial values of the optimizing variables $(\bo{R}_s,\bo{p}_s,\bs{\sigma}_{s})_{s\in\cc{S}}$ as well as the wireless channels and the number of training samples at each user. As a result, we repeat the computations for several random initial values, channels and data distributions among the users and then compute the average.

\begin{table}
	\caption{}
	\centering
	\begin{tabular}{|l|r|}
		\hline\hline
		System parameter& values\\ 
		\hline
		Number of cells or base stations $(S)$:& $7$\\
		Total number of users:& 100\\
		Cell radius:& 500m\\
		Uplink center frequency:& 2450MHz\\
		Channels' Rayleigh distribution scale parameter:& 1\\
		Uplink resource block bandwidth (B):& $180 \rr{KHz}$\\
		Thermal noise power spectral density $(N_0)$:& $-174 \text{dBm}$\\
		Maximum transmit power $(P_{\rr{max}})$:& $10 \text{dBm}$\\
		Minimum communication rate $(R_{\rr{min}})$:& $100 \rr{Kbs}$\\
		DP noise error upper bound $(V_{\rr{max}})$:& $12$\\
		Minimum total DP noise at each user $(N_{min})$:& $100$\\
		\hline
	\end{tabular}\label{tab_par}
\end{table}

\begin{algorithm}[t]
	\caption{Optimal scheduler with DP noise optimizer (OptSched+DP)}\label{alg_df3}
	\begin{algorithmic}[1]
		\State Perform  Algorithm~\ref{alg_df2}. 
		\State For the given $(\bo{R}^*_s,\bo{p}^*_s)_{s\in\cc{S}}$ from  Algorithm~\ref{alg_df2}, obtain the optimal $(\bs{\sigma}^*_{s})_{s\in\cc{S}}$ by solving~\eqref{eqn_sil}.
		\State Output the resulting parameters as a (sub-)optimal solution $(\bo{R}^*_s,\bo{p}^*_s,\bs{\sigma}^*_{s})_{s\in\cc{S}}$.
	\end{algorithmic}
\end{algorithm}

\begin{figure*}
	\begin{subfigure}{0.49\textwidth}
		\includegraphics[width=1\linewidth]{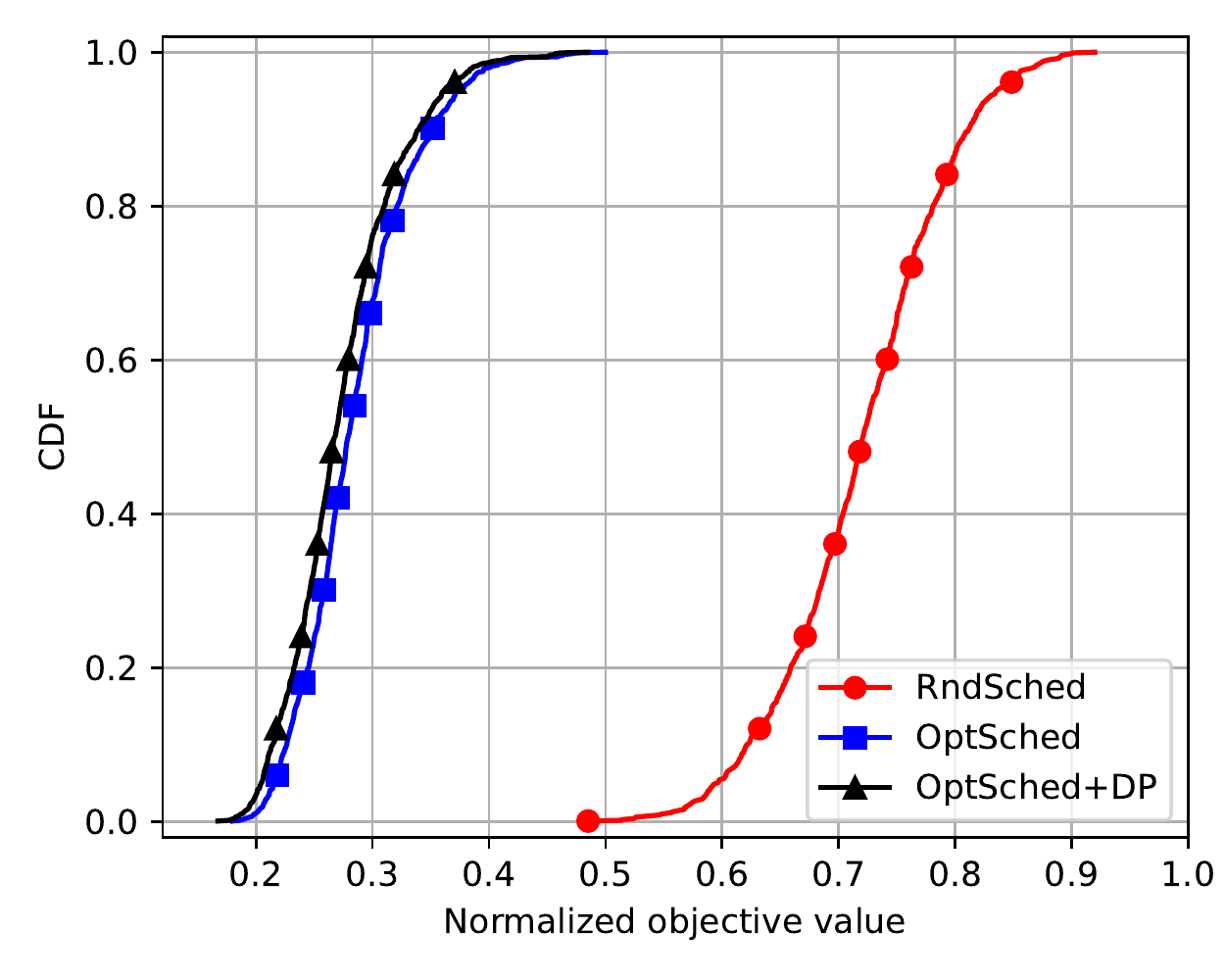}
		\caption{$R=5,\gamma=10^6$}\label{fig_sim_opt1}
	\end{subfigure}
	\hspace*{\fill}  
	\begin{subfigure}{0.49\textwidth}
		\includegraphics[width=1\linewidth]{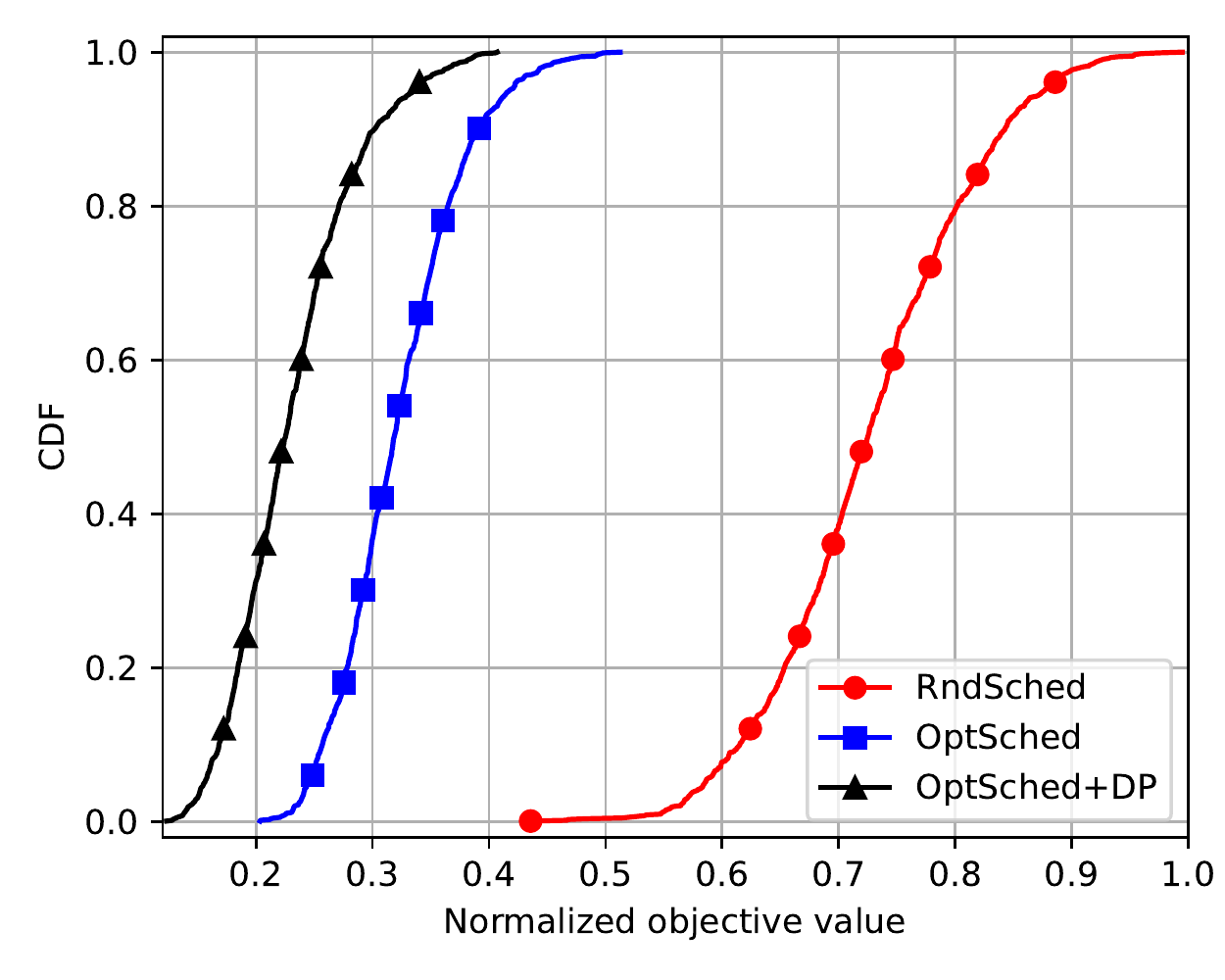}%
		\caption{$R=8,\gamma=10^7$}\label{fig_sim_opt2}
	\end{subfigure}
	
	\caption{Numerical results of the optimization problem in~\eqref{opt_ob1}.} \label{fig_sim_opt}
\end{figure*}

To this end, the variables $(\bo{R}_s)_{s\in\cc{S}}$ are first initialized based on a shuffled Round-robin scheme and $(\bo{p}_s,\bs{\sigma}_{s})_{s\in\cc{S}}$ are set uniformly at random such that $0\leq p_{s,i}\leq P_{\max}$ and $N_{\min}/K_{s,i}\leq \sigma_{s,i}\leq 6N_{\min}/K_{s,i}$ hold. 
Second, the users are positioned in a square area consisting of seven hexagon cells according to a uniform distribution. The edge devices are then assigned to their nearest base stations according to their random position. Based on their distances to the base stations, their fading channels are then computed by applying~\eqref{eqn_cgn}. 

An example of channel initialization, which is generated by our simulator in Python language, is shown in Fig.~\ref{fig_sim_ini}. In this case, the channels between one of the users and base stations are depicted as dashed lines. We notice that the cells 1-6 in this setting can cover users also outside their area while the central cell only covers devices inside the central hexagon. As a result, the effects of boundary and central cells are both taken into account in our simulations.

After the users are assigned to their corresponding base stations, the training data is randomly distributed among all users. Inspired by~\cite{MLSYS2020}, the number of samples $K_{s,i}$ are determined by a lognormal distribution.

Algorithms~\ref{alg_df2} and~\ref{alg_df3} should then provide us with $(\bo{R}^*_s,\bo{p}^*_s)_{s\in\cc{S}}$ and $(\bo{a}^*_s)_{s\in\cc{S}}$ which determine (sub-)optimal allocated resource blocks, uplink transmit powers, and the scheduled users.

The system parameters that are used in the computations are listed in Table~\ref{tab_par}. Fig.~\ref{fig_sim_opt} shows the results of all algorithms in the form of an empirical Cumulative Distribution Function (CDF) of the normalized objective value in~\eqref{opt_ob1}. The normalization is done by dividing the value of the objective function by the total number of samples (scheduled or unscheduled). The CDF is computed for two values of available number of resource blocks $R$ and the optimization constant $\gamma$ from~\eqref{opt_ob1} .
The results are averaged over $10^3$ random channels and initial values. As seen in Fig.~\ref{fig_sim_opt}, the OptSched (Algorithm~\ref{alg_df2}) outperforms the RndSched (Algorithm~\ref{alg_non}) in terms of minimizing the objective value in~\eqref{opt_ob1}. Moreover, the OptSched+DP (Algorithm~\ref{alg_df3}) further improves the results of the OptSched by reducing the total privacy leakage. Furthermore, the OptSched+DP achieves lower values for $\gamma=10^7$ compared with the case in which $\gamma=10^6$. This is because, larger $\gamma$ in~\eqref{opt_ob1} gives more weight to the DP noise optimization.

We also notice that by increasing $R$ from 5 to 8, the normalized objective values of the RndSched get slightly closer to the outcome of the OptSched algorithm. This is due to the fact that by increasing $R$ and keeping the total number of users constant, chances that all users are successfully scheduled by RndSched become higher. In this case, for large values of $R$, the RndSched might eventually achieves the same performance as for OptSched. However, the choice of selecting a large number of resource blocks for a low number of users is not desirable due to the limited amount of available bandwidth.

\subsection{Federated Learning Simulations}\label{sec_sim}

In this subsection, we apply the random parameters $\bo{a}_s$ and $\bs{\sigma}_{s}$ as well as (sub-)optimal $\bo{a}^*_s$ and $\bs{\sigma}^*_{s}$ from Subsection~\ref{sec_num} to an FL system as described in Algorithm~\ref{alg_dfl}. In this case, we assume that the main server and all users each maintain a fully connected neural network in the form of a multi-label classifier. The networks consist of two hidden layers, each with 256 nodes. To implement the simulations, we apply the TensorFlow, NumPy, and Matplotlib   packages~\cite{tensorflow2015-whitepaper,harris2020array,Hunter:2007}. Furthermore, we use the MNIST image dataset~\cite{lecun1998mnist} to train and test the multi-label classifier.  

\begin{figure*}
	\begin{subfigure}{0.49\textwidth}
		\includegraphics[width=1\linewidth]{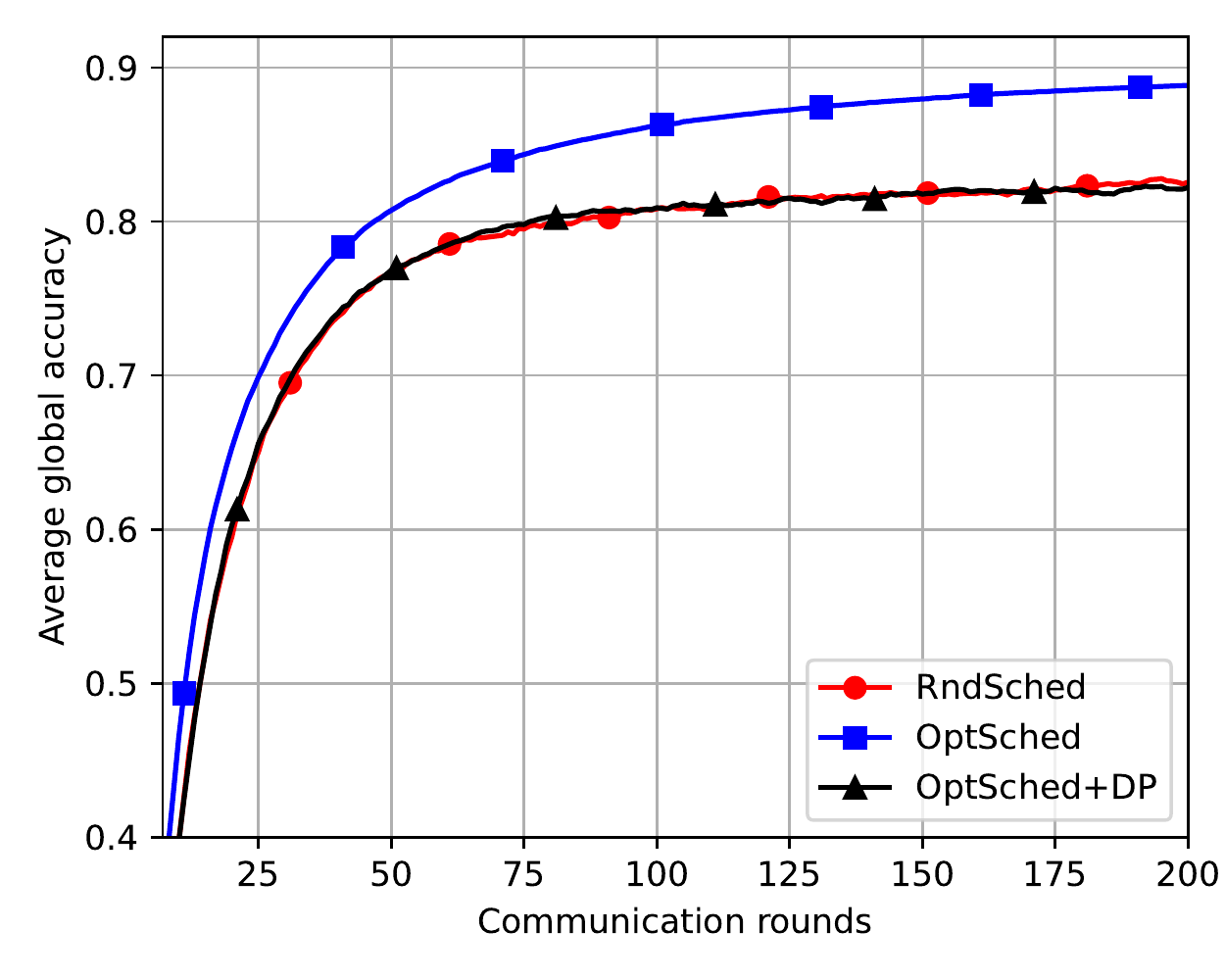}
		\caption{$R=5, \gamma=10^6$}\label{fig_sim_tf:a}
	\end{subfigure}
	\hspace*{\fill}
	\begin{subfigure}{0.49\textwidth}
		\includegraphics[width=1\linewidth]{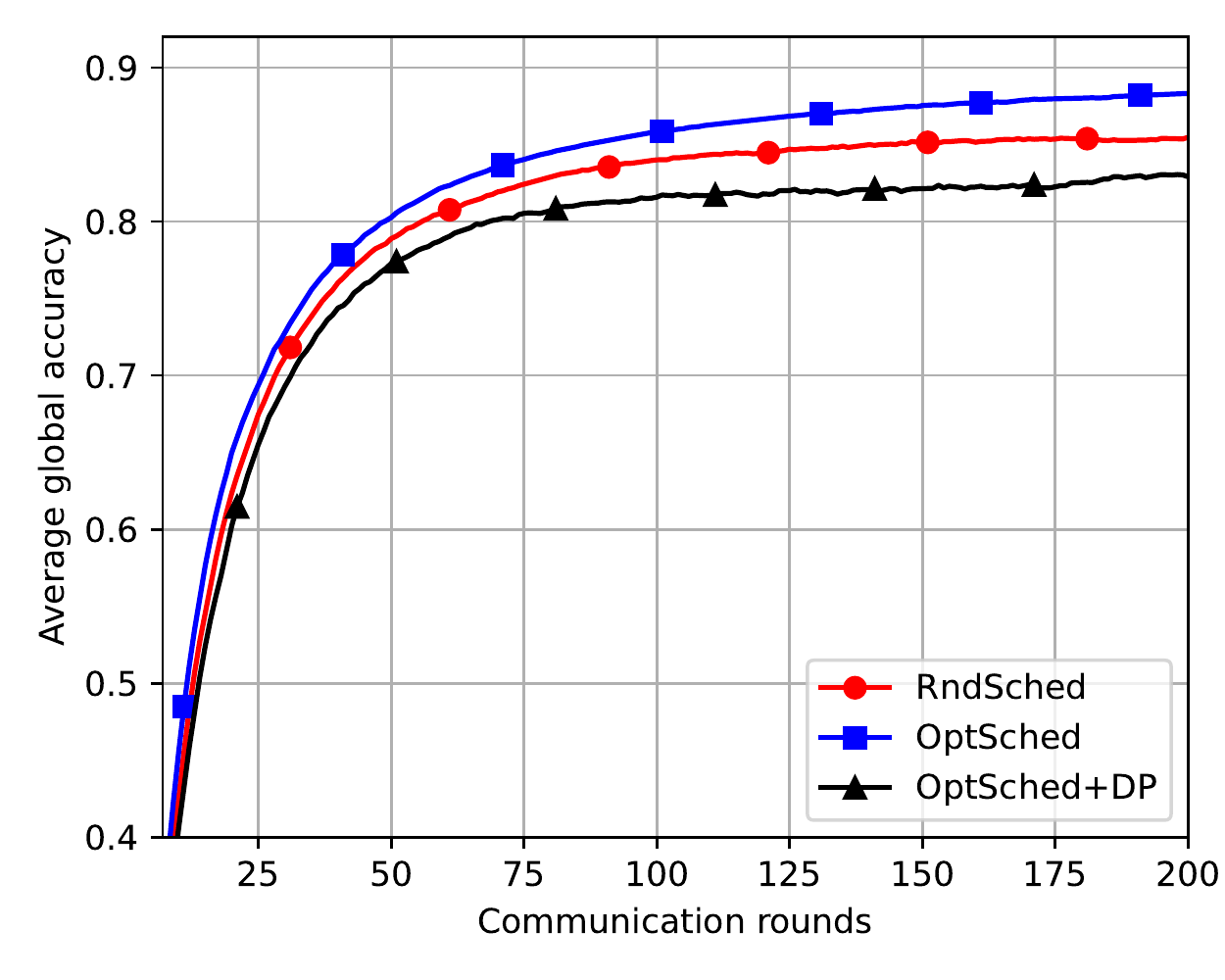}%
		\caption{$R=8, \gamma=10^7$}\label{fig_sim_tf:b}
	\end{subfigure}
	
	\bigskip
	
	\begin{subfigure}{0.49\textwidth}
		\includegraphics[width=1\linewidth] {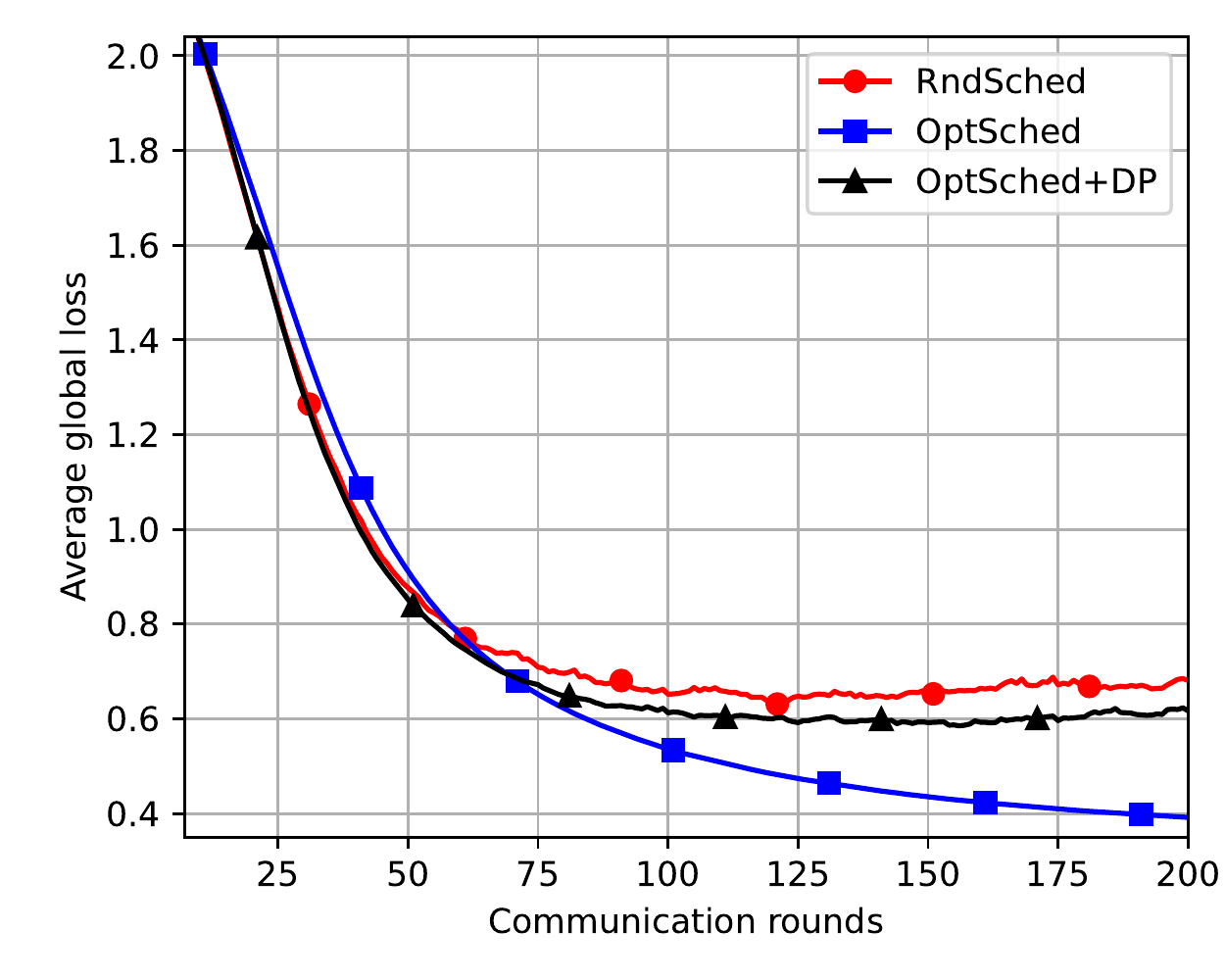}
		\caption{$R=5, \gamma=10^6$}\label{fig_sim_tf:c}
	\end{subfigure}
	\hspace*{\fill} 
	\begin{subfigure}{0.49\textwidth}
		\includegraphics[width=1\linewidth] {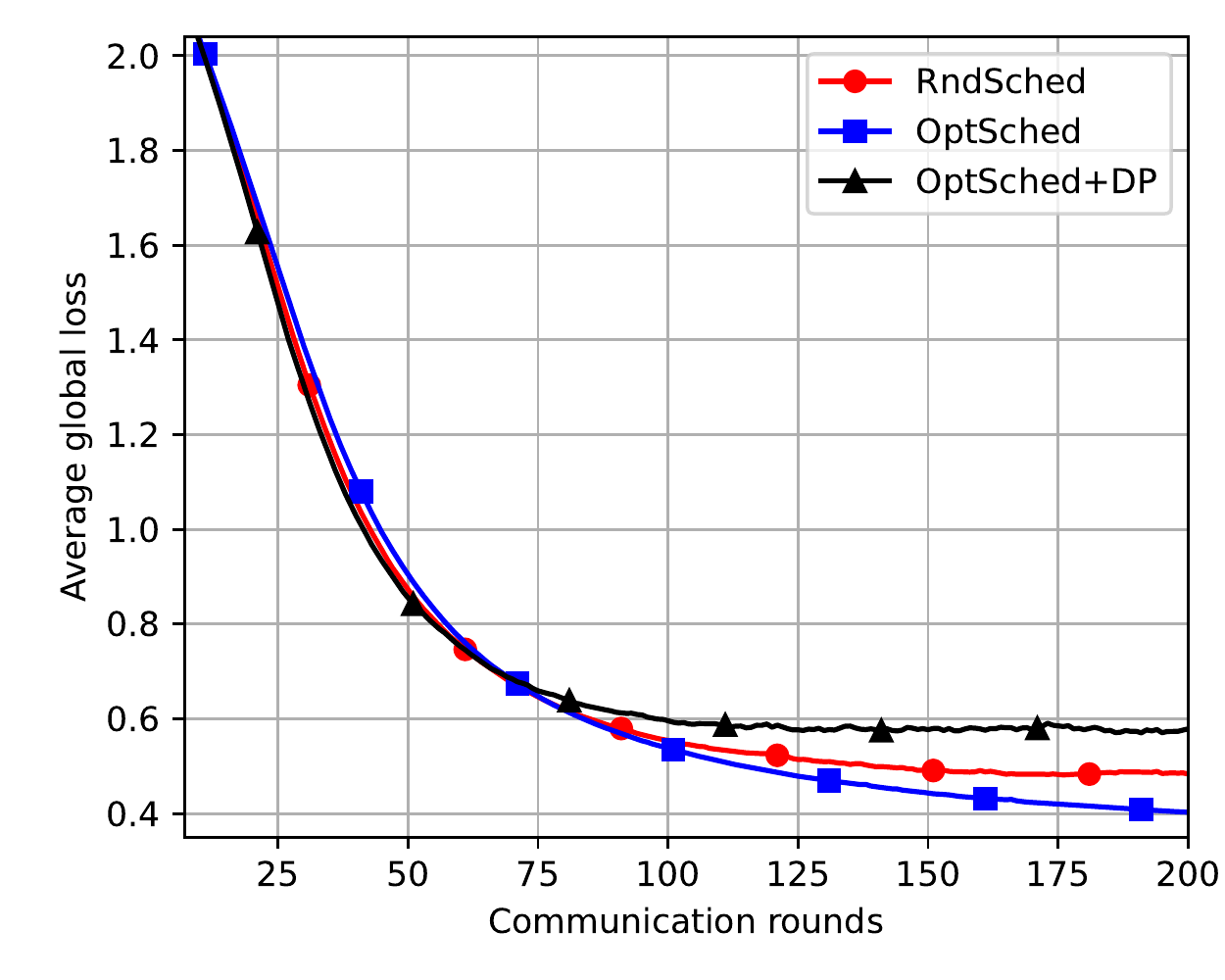}%
		\caption{$R=8, \gamma=10^7$}\label{fig_sim_tf:d}
	\end{subfigure}
	
	\bigskip
	
	\begin{subfigure}{0.49\textwidth}
		\includegraphics[width=1\linewidth] {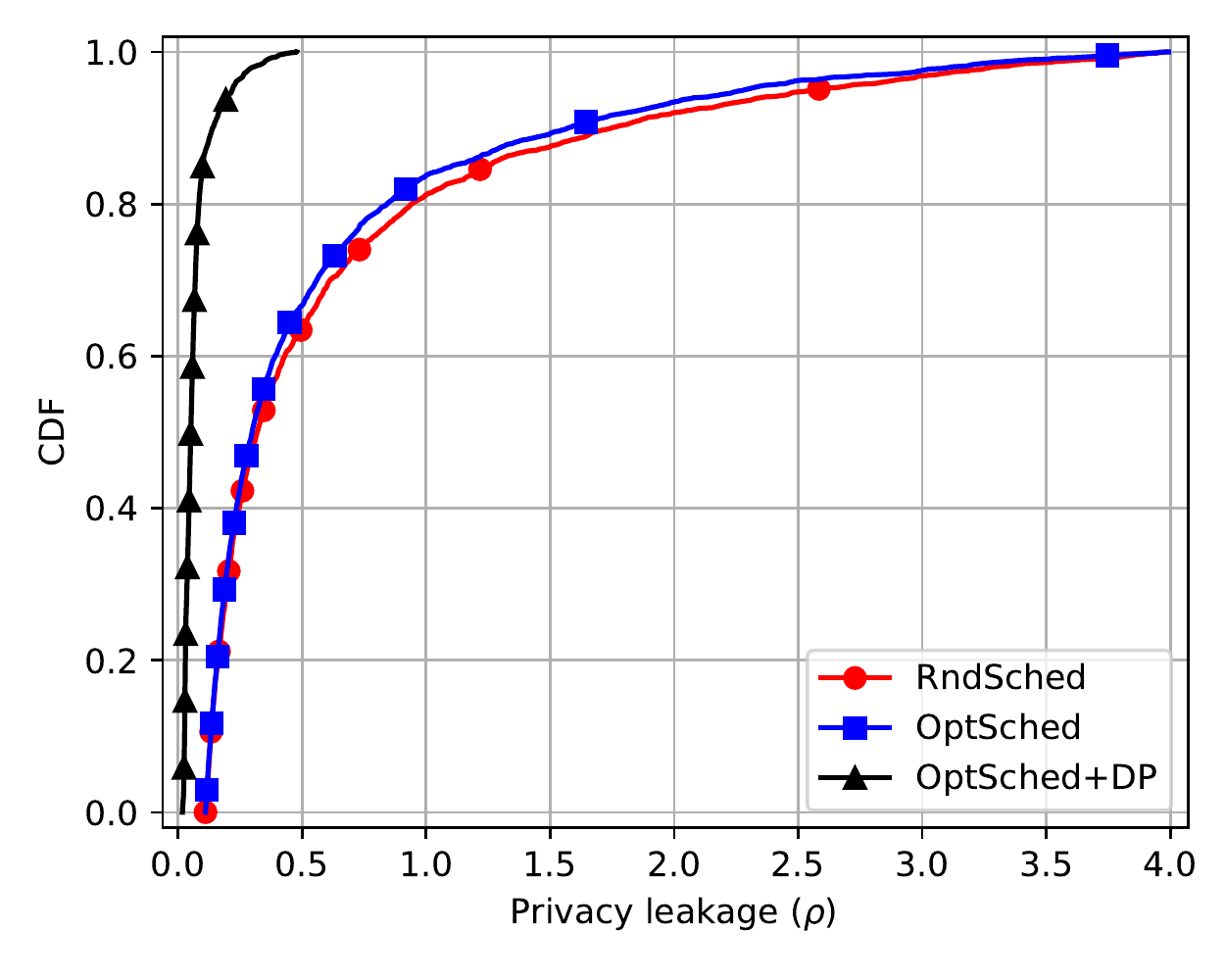}
		\caption{$R=5, \gamma=10^6$}\label{fig_sim_tf:e}
	\end{subfigure}
	\hspace*{\fill} 
	\begin{subfigure}{0.49\textwidth}
		\includegraphics[width=1\linewidth] {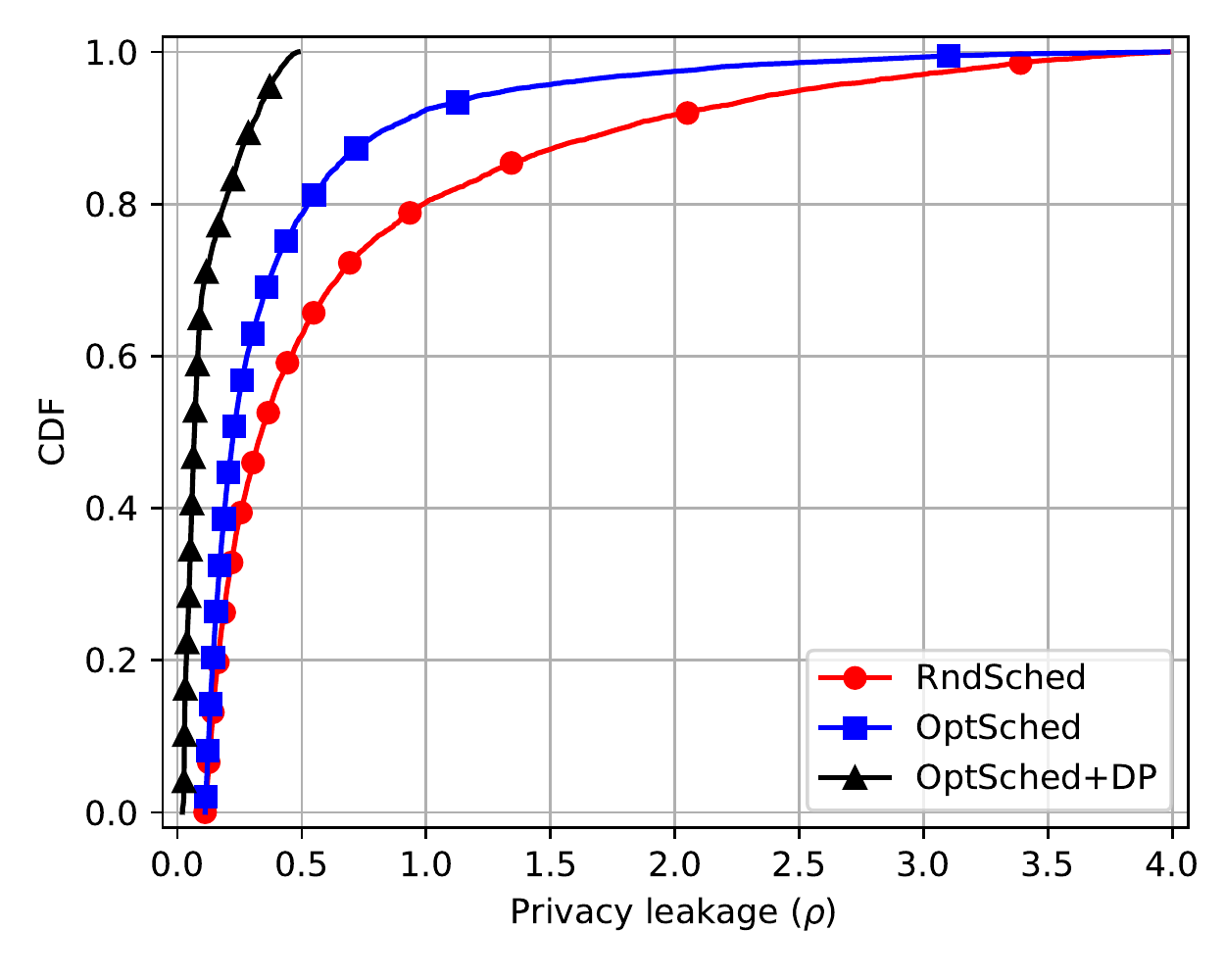}%
		\caption{$R=8, \gamma=10^7$}\label{fig_sim_tf:f}
	\end{subfigure}
	
	\caption{FL system with a learning rate of $\lambda=0.05$ and a maximum gradient global norm of $L=10$.} \label{fig_sim_tf}
\end{figure*}

In this setting, we train the local models over $T=200$ communication rounds between users and the main server. To follow our mathematical model in Section~\ref{sec_mod}, we perform no local iterations and use the batch gradient descent scheme. We do not apply any decay and use a fixed learning rate $\lambda=0.05$. 

Furthermore, to guarantee that Assumption~\ref{asm_inp} holds, the gradient of all weights are clipped so that their global norm is smaller than or equal to $L=10$. This directly affects the amount of privacy leakage as given by~\eqref{eqn_sli}. 

We perform the simulations over 100 channels and initial values and then average the resulting accuracy and loss. Furthermore, we generate the empirical CDF of the privacy leakage of all users. Fig.~\ref{fig_sim_tf} shows the accuracy, loss, and privacy leakage CDF of this learning system for different values of available resource blocks $R$ and optimization constant $\gamma$.

As seen in Fig.~\ref{fig_sim_tf}, the OptSched outperforms the RndSched algorithm in terms of accuracy and loss for both $R=5$ and $R=8$. In this case, the OptSched systematically selects the users with large chunks of data that have a better channel and suffer less from the inter-cell interference. The RndSched algorithm, however, fails in this scenario since it applies a random scheduling scheme. 

The OptSched+DP, on the other hand, slightly degrades the performance of the optimal scheduler by increasing and optimizing the DP noise. Yet OptSched+DP provides a similar or even better performance compared with RndSched scheme for small values of $R$ (see Figs~\ref{fig_sim_tf:a} and~\ref{fig_sim_tf:c}). The degradation is the price that is paid to improve the privacy. Figs.~\ref{fig_sim_tf:e} and~\ref{fig_sim_tf:f} show the empirical CDF of the privacy leakage ($\rho$). In this case, the value of $\rho$ at each user is computed by using~\eqref{eqn_sli} and the results over all simulation iterations are collected to compute the CDF. The simulations show that the OptSched+DP scheme substantially reduces the amount of privacy leakage at each user. In particular it achieves a maximum privacy leakage of around $\rho=0.5$ thanks to the DP optimizer scheme. This is a significant improvement compared with the RndSched scheme with a maximum leakage of around $\rho=4$.

Adjusting the optimization constant $\gamma$ is also crucial. In this regard, by choosing $\gamma=10^7$ the OptSched achieves lower privacy leakage compared with $\gamma=10^6$ (see Figs.~\ref{fig_sim_tf:e} and~\ref{fig_sim_tf:f}). This is because larger $\gamma$ gives less weight to the optimal scheduler in~\eqref{opt_ob1} and users with higher DP noise power are preferred in scheduling.

\section{Conclusion}

In this work, a privacy preserving FL procedure in a multiple base station scenario with inter-cell interference has been considered. An upper bound on the optimality gap of the convergence term of this learning scheme has been derived and an optimization problem to reduce this upper bound has been provided. We have proposed two sequential algorithms to obtain (sub-)optimal solutions for this optimization task; namely an optimal scheduler (OptSched) in Algorithm~\ref{alg_df2} and its extended version with DP optimizer (OptSched+DP) in Algorithm~\ref{alg_df3}. In designing these schemes we avoid non-linearity in the integer programming problems. The outputs of these algorithms are then applied to an FL system.

Simulation results have shown that the OptSched increases the accuracy of the classification FL system and reduces the loss compared with the RndSched when the number of available resource blocks $R$ is small. In this case, when the total number of users is $K=100$ and $R=5$, the OptSched shows an accuracy improvement of over $6\%$. Simulations have further shown that the OptSched not only improves the accuracy but also can reduce the privacy leakage compared with the RndSched if the parameter $\gamma$ is set properly.

The OptSched+DP, on the other hand, further optimizes the DP noise and substantially reduces the privacy leakage compared with both RndSched and OptSched. In this case, simulations have shown that the OptSched+DP reduces the maximum privacy leakage for both $R=5$ and $R=8$ by a factor of 8 (from $\rho=4$ to $\rho=0.5$). It is worth mentioning that when $R$ is small (e.g. $R=5$), this improvement is achieved while OptSched+DP shows a similar or even better performance in terms of accuracy and loss compared with RndSched. 

\section*{Appendix}
\subsection*{Proof of Theorem~\ref{thm_con}\centering}
\begin{proof}
It follows by using Assumption~\ref{asm_twc} and applying the Taylor expansion to the global loss function $f$ that
\begin{align}
f(\bo{w}^{(t+1)})\leq f(\bo{w}^{(t)})+\big(&\bo{w}^{(t+1)}-\bo{w}^{(t)}\big)^\intercal\nabla f(\bo{w}^{(t)})\nonumber\\&+\frac{L}{2}\big\|\bo{w}^{(t+1)}-\bo{w}^{(t)}\big\|_2^2.\label{eqn_oas}
\end{align}

Next, we compute the local updates at each user $i\in\cc{U}_s$ by combing~\eqref{eqn_sdf},~\eqref{eqn_upd}, and~\eqref{eqn_lcm} as follows
\begin{align}
\bo{w}_{s,i}^{(t+1)}=\bo{w}^{(t)}- \frac{\lambda}{K_{s,i}}\sum_{k=1}^{K_{s,i}}\nabla l(\bo{w}^{(t)}, \bo{x}_{s,i}^{(k)})-\lambda\bo{n}^{(t)}_{s,i}\label{eqn_sop}.
\end{align}
Furthermore, Combining~\eqref{eqn_agg} and~\eqref{eqn_ag2} implies that
\begin{align}
\bo{w}^{(t+1)}=\frac{1}{K_{\rr{a}}}\sum_{s\in\cc{S}}\sum_{i\in\cc{U}_s}K_{s,i}a_{s,i}\bo{w}_{s,i}^{(t+1)}.\label{eqn_so2}
\end{align}

We then obtain the global update at the main server by inserting the value of $\bo{w}_{s,i}^{(t+1)}$ from~\eqref{eqn_sop} into~\eqref{eqn_so2} as follows
\begin{align}
\bo{w}^{(t+1)}-\bo{w}^{(t)}=-\frac{\lambda}{K_{\rr{a}}}&\sum_{s\in\cc{S}}\sum_{i\in\cc{U}_s}a_{s,i}\sum_{k=1}^{K_{s,i}}\nabla l(\bo{w}^{(t)}, \bo{x}_{s,i}^{(k)})\nonumber\\
&-\frac{\lambda}{K_{\rr{a}}}\sum_{s\in\cc{S}}\sum_{i\in\cc{U}_s}K_{s,i}a_{s,i}\bo{n}^{(t)}_{s,i}.\label{eqn_sla}
\end{align}

To simplify the rest of calculations, we define a new random variable to reflect the difference between the global update and the global gradient as below:
\begin{align}
\Delta^{(t)}\coloneqq \nabla f(\bo{w}^{(t)}) +\frac{1}{\lambda}\big(\bo{w}^{(t+1)}-\bo{w}^{(t)}\big).\label{eqn_whm}
\end{align}

Now by inserting the term $\bo{w}^{(t+1)}-\bo{w}^{(t)}$ (the global update) from~\eqref{eqn_whm} into~\eqref{eqn_oas}, we have that
\begin{align}
f(\bo{w}^{(t+1)})\leq f(\bo{w}^{(t)})&+\lambda\big(\Delta^{(t)}-\nabla f(\bo{w}^{(t)})\big)^\intercal\nabla f(\bo{w}^{(t)})\nonumber\\&+\frac{\lambda^2 L}{2}\big\|\Delta^{(t)}-\nabla f(\bo{w}^{(t)})\big\|_2^2.\label{eqn_wld}
\end{align}

Furthermore, the following identity always holds:
\begin{align}
\bo{\|\bo{u}-\bo{v}\|_2^2}=\|\bo{u}\|_2^2+\|\bo{v}\|_2^2-2
\bo{u}^\intercal\bo{v}.\label{eqn_idl}
\end{align}

Considering the learning step size to be $\lambda=1/L$ and applying the identity~\eqref{eqn_idl} to~\eqref{eqn_wld}, it follows that
\begin{align}
f(\bo{w}^{(t+1)}) - f(\bo{w}^*)&\leq f(\bo{w}^{(t)})-f(\bo{w}^*)\nonumber\\&+\frac{1}{2L}\Big[\|\Delta^{(t)}\|_2^2-\|\nabla f(\bo{w}^{(t)})\|_2^2\Big],\label{eqn_ska}
\end{align}
where $f(\bo{w}^*)$ is the optimal loss function (Assumption~\ref{eqn_khs}).

Inspired by~\cite{chen2020joint}, We first obtain an upper bound on the expectation of the term $\|\Delta^{(t)}\|_2^2$ on the right hand side of~\eqref{eqn_ska}. It follows by combining~\eqref{eqn_sla} and~\eqref{eqn_whm} that
\begin{align}
&\bb{E}\Big[\|\Delta^{(t)}\|_2^2\Big]\nonumber\\&=\bb{E}\Bigg[ \bigg\|\nabla f(\bo{w}^{(t)})-\frac{1}{K_{\rr{a}}} \sum_{s\in\cc{S}}\sum_{i\in\cc{U}_s}a_{s,i}\sum_{k=1}^{K_{s,i}}\nabla l(\bo{w}^{(t)}, \bo{x}_{s,i}^{(k)})\nonumber\\
&\qquad\qquad\qquad-\frac{1}{K_{\rr{a}}}\sum_{s\in\cc{S}}
\sum_{i\in\cc{U}_s}K_{s,i}a_{s,i}\bo{n}^{(t)}_{s,i}\bigg\|_2^2 \Bigg]\nonumber\\
&= \bb{E}\Bigg[\bigg\|
-\frac{K-K_{\rr{a}}}{K K_{\rr{a}}}\sum_{\substack{(s,i):\\a_{s,i}=1}}\sum_{k=1}^{K_{s,i}}
\nabla l(\bo{w}^{(t)}, \bo{x}_{s,i}^{(k)})\nonumber\\
&\qquad\qquad\qquad+\frac{1}{K}\sum_{\substack{(s,i):\\a_{s,i}=0}}\sum_{k=1}^{K_{s,i}}
\nabla l(\bo{w}^{(t)}, \bo{x}_{s,i}^{(k)})\bigg\|_2^2
\Bigg]\nonumber\\
&\qquad\qquad\qquad+\bb{E}\Bigg[\bigg\|\frac{1}{K_{\rr{a}}}\sum_{s\in\cc{S}}
\sum_{i\in\cc{U}_s}K_{s,i}a_{s,i}\bo{n}^{(t)}_{s,i}\bigg\|_2^2
\Bigg]\label{eqn_slx}\\
&\leq \bb{E}\Bigg[\bigg(
\frac{K-K_{\rr{a}}}{K K_{\rr{a}}}\sum_{\substack{(s,i):\\a_{s,i}=1}}\sum_{k=1}^{K_{s,i}}\|
\nabla l(\bo{w}^{(t)}, \bo{x}_{s,i}^{(k)})\|_2\nonumber\\
&\qquad\qquad\qquad+\frac{1}{K}\sum_{\substack{(s,i):\\a_{s,i}=0}}\sum_{k=1}^{K_{s,i}}\|
\nabla l(\bo{w}^{(t)}, \bo{x}_{s,i}^{(k)})\|_2\bigg)^2
\Bigg]\nonumber\\
&\qquad\qquad\qquad+\sum_{e=1}^{d}\bb{E}\Bigg[\bigg(\sum_{s\in\cc{S}}
\sum_{i\in\cc{U}_s}\frac{K_{s,i}a_{s,i}}{K_{\rr{a}}}n_{s,i,e}^{(t)}\bigg)^2\Bigg],\label{eqn_ji2}
\end{align}
where~\eqref{eqn_slx} follows by applying~\eqref{eqn_bms} and~\eqref{eqn_idl} and the fact that the DP noise is independent of other random variables and $\bb{E}[\bo{n}^{(t)}_{s,i}]=\bo{0}$. Inequality~\eqref{eqn_ji2} is due to the triangle inequality and the fact that the vectors $\bo{n}_{s,i}=(n_{s,i,e}^{(t)})_{e\in[d]}$ are $d$-dimensional.

Next, by applying Assumption~\ref{asm_tw1} to~\eqref{eqn_ji2} we have for some $\xi_1\geq 0$ and $\xi_2\geq 1$ that
\begin{align}
&\bb{E}\Big[\|\Delta^{(t)}\|_2^2\Big]\nonumber\\&\leq
	\bigg(\sum_{s\in\cc{S}}\sum_{i\in\cc{U}_s}\frac{2K_{s,i}}{K}
	\Big(1-a_{s,i}\Big)\bigg)^2\bb{E}\bigg[\Big(\xi_1+\xi_2\|\nabla f(\bo{w}^{(t)})\|_2^2\Big)\bigg]\nonumber\\
	&\qquad\qquad\qquad+d\sum_{s\in\cc{S}}
	\sum_{i\in\cc{U}_s}\bigg(\frac{K_{s,i}a_{s,i}}{K_{\rr{a}}}\sigma_{s,i}\bigg)^2,\label{eqn_jik}
\end{align}
where the last term in~\eqref{eqn_jik} is obtained due to the fact that the random variables $n_{s,i,e}^{(j)}$ in~\eqref{eqn_ji2} are independent of each other.

On the other hand, since $f$ is $\mu$-strongly convex (Assumption~\ref{asm_scx}) we have that
\begin{align}
\|\nabla f(\bo{w}^{(t)})\|_2^2 &\geq 2\mu \big[f(\bo{w}^{(t)})-f(\bo{w}^*)\big].\label{eqn_mls}
\end{align}
By inserting~\eqref{eqn_jik} in~\eqref{eqn_ska} and using~\eqref{eqn_mls}, the proof follows.
\end{proof}

\bibliographystyle{IEEEtran}
\bibliography{ml_ref.bib}

\begin{thebibliography}{10}
\providecommand{\url}[1]{#1}
\csname url@samestyle\endcsname
\providecommand{\newblock}{\relax}
\providecommand{\bibinfo}[2]{#2}
\providecommand{\BIBentrySTDinterwordspacing}{\spaceskip=0pt\relax}
\providecommand{\BIBentryALTinterwordstretchfactor}{4}
\providecommand{\BIBentryALTinterwordspacing}{\spaceskip=\fontdimen2\font plus
\BIBentryALTinterwordstretchfactor\fontdimen3\font minus
  \fontdimen4\font\relax}
\providecommand{\BIBforeignlanguage}[2]{{%
\expandafter\ifx\csname l@#1\endcsname\relax
\typeout{** WARNING: IEEEtran.bst: No hyphenation pattern has been}%
\typeout{** loaded for the language `#1'. Using the pattern for}%
\typeout{** the default language instead.}%
\else
\language=\csname l@#1\endcsname
\fi
#2}}
\providecommand{\BIBdecl}{\relax}
\BIBdecl

\bibitem{eldar2022machine}
Y.~C. Eldar, A.~Goldsmith, D.~G{\"u}nd{\"u}z, H.~V. Poor \emph{et~al.},
  \emph{Machine Learning and Wireless Communications}.\hskip 1em plus 0.5em
  minus 0.4em\relax Cambridge University Press, 2022.

\bibitem{hellstrom2022wireless}
H.~Hellstr{\"o}m, J.~M.~B. da~Silva~Jr, M.~M. Amiri \emph{et~al.}, ``Wireless
  for machine learning: A survey,'' \emph{Foundations and
  Trends{\textregistered} in Signal Processing}, vol.~15, no.~4, pp. 290--399,
  2022.

\bibitem{bonawitz2019towards}
K.~Bonawitz, H.~Eichner, W.~Grieskamp \emph{et~al.}, ``Towards federated
  learning at scale: System design,'' in \emph{Proc. Syst. Mach. Learn. Conf.},
  2019, pp. 1--15.

\bibitem{konevcny2016federated}
\BIBentryALTinterwordspacing
J.~Kone{\v{c}}n{\'y}, H.~B. McMahan, D.~Ramage, and P.~Richt{\'{a}}rik,
  ``Federated optimization: Distributed machine learning for on-device
  intelligence,'' \emph{CoRR}, vol. abs/1610.02527, 2016. [Online]. Available:
  \url{http://arxiv.org/abs/1610.02527}
\BIBentrySTDinterwordspacing

\bibitem{mcmahan2017communication}
B.~McMahan, E.~Moore, D.~Ramage \emph{et~al.}, ``Communication-efficient
  learning of deep networks from decentralized data,'' in \emph{Proc.
  International Conference on Artificial Intelligence and Statistics}.\hskip
  1em plus 0.5em minus 0.4em\relax PMLR, 2017, pp. 1273--1282.

\bibitem{li2020federated}
T.~Li, A.~K. Sahu, A.~Talwalkar, and V.~Smith, ``Federated learning:
  Challenges, methods, and future directions,'' \emph{IEEE Signal Process.
  Mag.}, vol.~37, no.~3, pp. 50--60, 2020.

\bibitem{samarakoon2019distributed}
S.~Samarakoon, M.~Bennis, W.~Saad, and M.~Debbah, ``Distributed federated
  learning for ultra-reliable low-latency vehicular communications,''
  \emph{IEEE Trans. Commun.}, vol.~68, no.~2, pp. 1146--1159, 2019.

\bibitem{chen2020joint}
M.~Chen, Z.~Yang, W.~Saad \emph{et~al.}, ``A joint learning and communications
  framework for federated learning over wireless networks,'' \emph{IEEE Trans.
  Wireless Commun.}, vol.~20, no.~1, pp. 269--283, 2020.

\bibitem{amiri2020federated}
M.~M. Amiri and D.~G{\"u}nd{\"u}z, ``Federated learning over wireless fading
  channels,'' \emph{IEEE Trans. Wireless Commun.}, vol.~19, no.~5, pp.
  3546--3557, 2020.

\bibitem{yang2019eeFL}
Z.~Yang, M.~Chen, W.~Saad \emph{et~al.}, ``Energy efficient federated learning
  over wireless communication networks,'' \emph{IEEE Trans. Wireless Commun.},
  vol.~20, no.~3, pp. 1935--1949, March 2021.

\bibitem{hamdi2021federated}
R.~Hamdi, M.~Chen, A.~B. Said \emph{et~al.}, ``Federated learning over energy
  harvesting wireless networks,'' \emph{IEEE Internet Things J.}, 2021.

\bibitem{wang2022int}
Z.~Wang, Y.~Zhou, Y.~Shi, and W.~Zhuang, ``Interference management for
  over-the-air federated learning in multi-cell wireless networks,'' \emph{IEEE
  J. Sel. Areas Commun.}, 2022.

\bibitem{chen2022fedw}
H.~Chen, S.~Huang, D.~Zhang \emph{et~al.}, ``Federated learning over wireless
  {IoT} networks with optimized communication and resources,'' \emph{IEEE
  Internet Things J.}, 2022.

\bibitem{zheng2020energy}
Q.~Zeng, Y.~Du, K.~Huang, and K.~K. Leung, ``Energy-efficient radio resource
  allocation for federated edge learning,'' in \emph{2020 IEEE International
  Conference on Communications Workshops}, 2020, pp. 1--6.

\bibitem{wang2019adaptive}
S.~Wang, T.~Tuor, T.~Salonidis \emph{et~al.}, ``Adaptive federated learning in
  resource constrained edge computing systems,'' \emph{IEEE J. Sel. Areas
  Commun.}, vol.~37, no.~6, pp. 1205--1221, 2019.

\bibitem{tran2019federated}
N.~H. Tran, W.~Bao, A.~Zomaya \emph{et~al.}, ``Federated learning over wireless
  networks: Optimization model design and analysis,'' in \emph{IEEE INFOCOM -
  IEEE Conference on Computer Communications}, 2019, pp. 1387--1395.

\bibitem{hosseinalipour2020multistage}
S.~Hosseinalipour, S.~S. Azam, C.~G. Brinton \emph{et~al.}, ``Multi-stage
  hybrid federated learning over large-scale d2d-enabled fog networks,''
  \emph{IEEE/ACM Trans. Netw.}, pp. 1--16, 2022.

\bibitem{khan2021disp}
L.~U. Khan, W.~Saad, Z.~Han, and C.~S. Hong, ``Dispersed federated learning:
  Vision, taxonomy, and future directions,'' \emph{IEEE Wireless Commun.},
  vol.~28, no.~5, pp. 192--198, 2021.

\bibitem{khan2021disp2}
L.~U. Khan, Y.~K. Tun, M.~Alsenwi \emph{et~al.}, ``A dispersed federated
  learning framework for {6G}-enabled autonomous driving cars,'' \emph{arXiv
  preprint arXiv:2105.09641}, 2021.

\bibitem{zhang2022scalable}
Z.~Zhang, Z.~Gao, Y.~Guo, and Y.~Gong, ``Scalable and low-latency federated
  learning with cooperative mobile edge networking,'' \emph{arXiv preprint
  arXiv:2205.13054}, 2022.

\bibitem{pandey2022edge}
S.~R. Pandey, M.~N.~H. Nguyen, T.~N. Dang \emph{et~al.}, ``Edge-assisted
  democratized learning toward federated analytics,'' \emph{IEEE Internet
  Things J.}, vol.~9, no.~1, pp. 572--588, 2022.

\bibitem{asad2022fl}
M.~Asad, A.~Moustafa, F.~A. Rabhi, and M.~Aslam, ``{THF}: 3-way hierarchical
  framework for efficient client selection and resource management in federated
  learning,'' \emph{IEEE Internet of Things Journal}, vol.~9, no.~13, pp.
  11\,085--11\,097, 2022.

\bibitem{al2016reconstruction}
M.~Al-Rubaie and J.~M. Chang, ``Reconstruction attacks against mobile-based
  continuous authentication systems in the cloud,'' \emph{IEEE Trans. Inf.
  Forensics Security}, vol.~11, no.~12, pp. 2648--2663, 2016.

\bibitem{dwork2006calibrating}
C.~Dwork, F.~McSherry, K.~Nissim, and A.~Smith, ``Calibrating noise to
  sensitivity in private data analysis,'' in \emph{Proc. Theory of Cryptography
  Conference}.\hskip 1em plus 0.5em minus 0.4em\relax Springer, 2006, pp.
  265--284.

\bibitem{dwork2014algorithmic}
C.~Dwork and A.~Roth, ``The algorithmic foundations of differential privacy.''
  \emph{Foundations and Trends in Theoretical Computer Science}, vol.~9, no.
  3-4, pp. 211--407, 2014.

\bibitem{hu2020concentrated}
R.~Hu, Y.~Guo, and Y.~Gong, ``Concentrated differentially private and utility
  preserving federated learning,'' \emph{arXiv preprint arXiv:2003.13761},
  2020.

\bibitem{wu2021incentivizing}
M.~Wu, D.~Ye, J.~Ding \emph{et~al.}, ``Incentivizing differentially private
  federated learning: A multi-dimensional contract approach,'' \emph{IEEE
  Internet Things J.}, 2021.

\bibitem{sun2021pain}
P.~Sun, H.~Che, Z.~Wang \emph{et~al.}, ``{Pain-FL}: Personalized
  privacy-preserving incentive for federated learning,'' \emph{IEEE J. Sel.
  Areas Commun.}, vol.~39, no.~12, pp. 3805--3820, 2021.

\bibitem{wei2020federated}
K.~Wei, J.~Li, M.~Ding \emph{et~al.}, ``Federated learning with differential
  privacy: Algorithms and performance analysis,'' \emph{IEEE Trans. Inf.
  Forensics Security}, vol.~15, pp. 3454--3469, 2020.

\bibitem{wei2021low}
K.~Wei, J.~Li, C.~Ma \emph{et~al.}, ``Low-latency federated learning over
  wireless channels with differential privacy,'' \emph{IEEE J. Sel. Areas
  Commun.}, vol.~40, no.~1, pp. 290--307, 2022.

\bibitem{seif2021privacy}
M.~S.~E. Mohamed, W.-T. Chang, and R.~Tandon, ``Privacy amplification for
  federated learning via user sampling and wireless aggregation,'' \emph{IEEE
  J. Sel. Areas Commun.}, vol.~39, no.~12, pp. 3821--3835, 2021.

\bibitem{liu2022fl}
T.~Liu, B.~Di, and L.~Song, ``Privacy-preserving federated edge learning:
  Modelling and optimization,'' \emph{IEEE Commun. Lett.}, 2022.

\bibitem{diamond2016cvxpy}
S.~Diamond and S.~Boyd, ``{CVXPY}: {A} {P}ython-embedded modeling language for
  convex optimization,'' \emph{Journal of Machine Learning Research}, vol.~17,
  no.~83, pp. 1--5, 2016.

\bibitem{agrawal2018rewriting}
A.~Agrawal, R.~Verschueren, S.~Diamond, and S.~Boyd, ``A rewriting system for
  convex optimization problems,'' \emph{Journal of Control and Decision},
  vol.~5, no.~1, pp. 42--60, 2018.

\bibitem{cvxopt}
M.~Andersen, L.~Vandenberghe, and J.~Dahl, ``{CVXOPT}: A python package for
  convex optimization,'' \emph{URL https://cvxopt.org}.

\bibitem{glpk}
A.~Makhorin, ``{GNU} linear programming kit ({GLPK}),'' \emph{URL
  http://www.gnu.org/software/glpk/glpk.html}.

\bibitem{Domahidi2013ecos}
A.~Domahidi, E.~Chu, and S.~Boyd, ``{ECOS}: {A}n {SOCP} solver for embedded
  systems,'' in \emph{European Control Conference (ECC)}, 2013, pp. 3071--3076.

\bibitem{tensorflow2015-whitepaper}
\BIBentryALTinterwordspacing
M.~Abadi, A.~Agarwal, P.~Barham \emph{et~al.}, ``{TensorFlow}: Large-scale
  machine learning on heterogeneous systems,'' 2015. [Online]. Available:
  \url{https://www.tensorflow.org/}
\BIBentrySTDinterwordspacing

\bibitem{harris2020array}
C.~R. Harris, K.~J. Millman, S.~J. van~der Walt \emph{et~al.}, ``Array
  programming with {NumPy},'' \emph{Nature}, vol. 585, no. 7825, pp. 357--362,
  Sep. 2020.

\bibitem{Hunter:2007}
J.~D. Hunter, ``Matplotlib: A {2D} graphics environment,'' \emph{IEEE Computing
  in Science \& Engineering}, vol.~9, no.~3, pp. 90--95, 2007.

\bibitem{bun2016concentrated}
M.~Bun and T.~Steinke, ``Concentrated differential privacy: Simplifications,
  extensions, and lower bounds,'' in \emph{Proc. Theory of Cryptography
  Conference}.\hskip 1em plus 0.5em minus 0.4em\relax Springer, 2016, pp.
  635--658.

\bibitem{nesterov2003introductory}
Y.~Nesterov, \emph{Introductory lectures on convex optimization: A basic
  course}.\hskip 1em plus 0.5em minus 0.4em\relax Springer Science \& Business
  Media, 2003, vol.~87.

\bibitem{abadi2016dldp}
M.~Abadi, A.~Chu, I.~Goodfellow \emph{et~al.}, ``Deep learning with
  differential privacy,'' in \emph{Proc. of the 2016 ACM SIGSAC Conference on
  Computer and Communications Security}.\hskip 1em plus 0.5em minus 0.4em\relax
  Association for Computing Machinery, 2016, p. 308–318.

\bibitem{friedlander2012hybrid}
M.~P. Friedlander and M.~Schmidt, ``Hybrid deterministic-stochastic methods for
  data fitting,'' \emph{SIAM J. Sci. Comput.}, vol.~34, no.~3, pp.
  A1380--A1405, 2012.

\bibitem{moretti2007ofdm}
M.~Moretti and A.~Todini, ``A resource allocator for the uplink of multi-cell
  {OFDMA} systems,'' \emph{IEEE Trans. Wireless Commun.}, vol.~6, no.~8, pp.
  2807--2812, 2007.

\bibitem{rappaport1995pro}
J.~Andersen, T.~Rappaport, and S.~Yoshida, ``Propagation measurements and
  models for wireless communications channels,'' \emph{IEEE Commun. Mag.},
  vol.~33, no.~1, pp. 42--49, 1995.

\bibitem{bazaraa2013nonlinear}
M.~S. Bazaraa, H.~D. Sherali, and C.~M. Shetty, \emph{Nonlinear programming:
  theory and algorithms}.\hskip 1em plus 0.5em minus 0.4em\relax John Wiley \&
  Sons, 2013.

\bibitem{MLSYS2020}
T.~Li, A.~K. Sahu, M.~Zaheer \emph{et~al.}, ``Federated optimization in
  heterogeneous networks,'' in \emph{Proc. of Machine Learning and Systems},
  vol.~2, 2020, pp. 429--450.

\bibitem{lecun1998mnist}
Y.~LeCun, ``The {MNIST} database of handwritten digits,'' \emph{URL
  http://yann.lecun.com/exdb/mnist/}, 1998.

\end{thebibliography}

\end{document}